\title{Single-Winner Voting with Alliances: \\
       Avoiding the Spoiler Effect}
\author{Grzegorz Pierczyński$^{1,2}$, Stanisław Szufa$^{1,3}$\\
    $^1$ AGH University, Poland\\
    $^2$ University of Warsaw, Poland\\
    $^3$ Université Paris Dauphine-PSL, France
}
\date{}
\newcommand{\BibTeX}{\rm B\kern-.05em{\sc i\kern-.025em b}\kern-.08em\TeX}
\newtheorem{theorem}{Theorem}
\newtheorem{proposition}{Proposition}
\newtheorem{observation}{Observation}
\newtheorem{definition}{Definition}
\newtheorem{example}{Example}
\newtheorem{intuition}{Intuition}
\newcommand{\pref}{{{\mathrm{Pref}}}}
\newcommand{\alliances}{{{\mathcal A}}}
\definecolor{myred}{RGB}{201, 22, 22}
\definecolor{mygreen}{RGB}{38, 150, 68}
\newcommand{\yes}{\textcolor{mygreen}{\ding{51}}}
\newcommand{\no}{\textcolor{myred}{\ding{55}}}
\newcommand{\Keyally}{Solitary winner\xspace}
\newcommand{\keyally}{solitary winner\xspace}
\newcommand{\keyxally}{solitary-winner\xspace}
\newcommand{\Keyxally}{Solitary-Winner\xspace}
\newcommand{\keyallyshort}{SW\xspace}
\newcommand{\keyallyplural}{solitary winners\xspace}
\newcommand{\Independentwinner}{Independent winner\xspace}
\newcommand{\independentwinner}{independent winner\xspace}
\newcommand{\independentxwinner}{independent-winner\xspace}
\newcommand{\Independentxwinner}{Independent-Winner\xspace}
\newcommand{\independentwinnershort}{IW\xspace}
\newcommand{\independentwinnerplural}{independent winners\xspace}
\newcommand{\Allynoharm}{Ally-no-harm\xspace}
\newcommand{\allynoharm}{ally-no-harm\xspace}
\newcommand{\Splitting}{Resistance to alliance-splitting\xspace}
\newcommand{\splitting}{resistance to alliance-splitting\xspace}
\newcommand{\splittingadj}{resistant to alliance-splitting\xspace}
\newcommand{\Clonedallynohelp}{Independence of similar allies\xspace}
\newcommand{\clonedallynohelp}{independence of similar allies\xspace}
\newcommand{\clonedallynohelpadj}{independent of similar allies\xspace}
\newcommand{\fixedwidth}[1]{\makebox[1.2em][c]{$#1$}}
\begin{document}



\maketitle 

\begin{abstract}
We study the setting of single-winner elections with ordinal preferences where candidates might be members of \emph{alliances} (which may correspond to e.g., political parties, factions, or coalitions). However, we do not assume that candidates from the same alliance are necessarily adjacent in voters' rankings. In such case, every classical voting rule is vulnerable to the spoiler effect, i.e., the presence of a candidate may harm his or her alliance. We therefore introduce a new idea of \emph{alliance-aware} voting rules which extend the classical ones. We show that our approach is superior both to using classical cloneproof voting rules and to running primaries within alliances before the election.

We introduce several alliance-aware voting rules and show that they satisfy the most desirable standard properties of their classical counterparts as well as newly introduced axioms for the model with alliances which, e.g., exclude the possibility of the spoiler effect. Our rules have natural definitions and are simple enough to explain to be used in practice.
\end{abstract}

\section{Introduction}\label{sec:introduction}

Imagine that there is a presidential election in the Republic of Social Choice. There are only two political parties, Party~A and Party~B. Party~B is represented by only one candidate, Bob, while Party~A has two leaders willing to candidate, Alice and Adam. Alice presents more center-wing views, and is therefore considered to have greater chances to win against Bob. On the other hand, Adam is much more popular within the electorate of Party~A. In the end, both candidates decide to run in the election. The voters' preferences are as follows:
\begin{alignat*}{3}
    46\%\text{ of voters:}\quad & \text{ Adam } &&\succ \text{ Alice } &&\succ \text{ Bob } \\
    5\%\text{ of voters:}\quad & \text{ Alice } &&\succ \text{ Bob } &&\succ \text{ Adam } \\
    43\% \text{ of voters:}\quad & \text{ Bob } &&\succ \text{ Alice } &&\succ \text{ Adam } \\
    6\%\text{ of voters:}\quad & \text{ Bob } &&\succ \text{ Adam } &&\succ \text{ Alice }
\end{alignat*}

Note that, apart from the clear supporters of Party~A and Party~B, there are also~${5\%}$ of centrist voters, who, in general, like Party~A more than Party~B, yet they consider Adam to be too extreme, and hence prefer Bob to Adam, even though Bob is from Party~B. On the other hand, there are~${6\%}$ of Bob's supporters who were impressed by Adam's performance in the pre-election debate and consider him to be a better candidate than Alice.

Since the Republic of Social Choice uses Plurality in presidential elections, Bob is declared the winner. The supporters of Party~A are very disappointed, especially since they know that they ``could have won'' if only Adam did not run in the election. Their main question is then how to avoid such situations in the future.

Some of them say that the solution is clear---there should have been only one running candidate from Party~A instead of several ones. However, if Party~A held a primary election, Adam would have won (no matter whether the voters taking part in the primaries represented the views of only the supporters of Party~A or of the whole population). On the other hand, since Party~A presents itself as a transparent and pro-democratic party, the backstage nomination of Alice was completely out of the question.

An alternative opinion is that the problem stemmed from using Plurality to determine the winner. ``In our elections, we should use a more sophisticated rule that is independent of clones'', it is argued, ``for example, STV or the Schulze's method''. However, the proposed rules---as well as every other well-established ordinal voting rule the supporters of Party~A are aware of---would still elect Bob under the above voters' preferences. 

Studying the existing literature on social choice, the supporters of Party~A could not find a fully satisfactory solution to their problem.

\subsection{Our Contribution}
The situation presented above is sometimes called the \emph{spoiler effect}~\citep{borgers2010mathematics}---Adam not only did not win, but also prevented a member of his party from winning. We propose a new idea that solves this problem, \emph{alliance-aware voting rules}, that take as input not only voters' preferences, but also the information about the non-overlapping \emph{alliances} (parties) of the candidates.

In \Cref{sec:axioms}, we propose several basic axioms that are desirable for alliance-aware voting rules. Our basic axioms, discussed in \Cref{sec:basic-axioms}, prevent the spoiler effect among allies and ensure that it is not possible to affect the election result by certain sorts of manipulation (such as duplicating allies, or splitting a losing alliance into smaller ones). We argue that the rules based on the mechanics of Plurality and Maximin are best-suited for satisfying these axioms. Next, in \Cref{sec:individual-axioms} we present additional axioms that concern electing the best candidate within the winning alliance and argue that one could have two reasonable, yet contradictory, intuitions here. Then in \Cref{sec:rules} we present four alliance-aware rules and show that they have good axiomatic properties (i.e., they satisfy our newly introduced notions and preserve the most critical properties of the classical variants of Plurality or Maximin they are based on). 

In \Cref{sec:experiments} we analyze our setting experimentally. We show that the standard variants of our alliance-aware rules often admit the spoiler effect among allies, even if we run primaries within each alliance before the voting. Further, we show that the additional axiomatic guarantees provided by our rules, fully eliminating this problem, do not require sacrificing much of social welfare.

Notably, all the four alliance-aware rules are polynomial-time computable and have simple, intuitive definitions. This fact makes them reasonable proposals for real-life elections. Additionally, in \Cref{sec:conclusion} we show that the extensions of Plurality can be implemented with simpler ballots than the full ordinal ones and discuss how our results can be applied to the model where the alliances are not fully disjoint.

\subsection{Related Work}\label{sec:related-work}


Our work is closely related to two widely studied topics in the literature on social choice---the spoiler effect and strategic candidacy. 

The most important work regarding the spoiler effect is the famous work of \citet{arrow1950difficulty}, where the Arrow's Impossibility Theorem is proved. This theorem states that every reasonable rule (satisfying the very weak conditions of unanimity and non-dictatorship) violates the \emph{Independence of Irrelevant Alternatives} (IIA) axiom. It means that for every reasonable rule the removal of a non-winning candidate~${c}$ may result in changing the result of the election---which means that the rule is vulnerable to the spoiler effect. This work has had tremendous impact on social choice, which manifested itself in the number of works that discuss, criticize, or propose weaker variants of this axiom (see, for example, the works of \citet{ray1973independence,Osborne76,bordes1991independence,Maskin20} and~\citet{Sen70}).

Among such works, the one by \citet{tideman1987independence} is the most relevant to our paper. Here the axiom of \emph{independence of clones} was introduced. The idea is to focus only on avoiding the spoiler effect within groups of \emph{clones} that intuitively correspond to candidates that are, in a certain formal sense, similar. We find this weakening natural and intuitive. However, in that work the similarity is deduced from the votes---the clones are required to be adjacent in all the voters' rankings. Note that in our example, since a small fraction of voters does not rank Alice and Adam next to one another, this axiom does not provide any guarantees regarding vote-splitting. On the contrary, we assume that the information about the candidates' similarity is given upfront.

So far only few papers studied the immunity of certain voting rules to the spoiler effect; a notable exception is the analysis of STV in the work of~\citet{mccune2023ranked}. From the computational complexity perspective, the spoiler effect has been studied within so called \emph{electoral control problems}~\citep{meir2008complexity,liu2009parameterized,chevaleyre2010possible,neveling2020complexity,erdelyi2021towards}.

The next group of related works shares with us the assumption that the information about candidates' views is given, yet typically in the form of rankings over other candidates. Then the problem of the spoiler effect is studied from the strategic perspective (see the pioneering paper of \citet{dutta2001strategic} who prove an analogous result to the Arrow's one in this setting, and further follow-ups e.g., by~\citet{ehlers2003candidate,polukarov2015convergence,lang2013new} and~\citet{rodriguez2006candidate,rodriguez2006correspondence}).

Some of the most closely connected works to ours are the ones by~\citet{faliszewski2016hard} and by~\citet{harrenstein2021hotelling}. The first one studies the computational complexity of problems connected with selecting the optimal representative of a party in single-winner elections, while the second one characterizes Nash equilibria in a generalized Hotelling-Downs model.

\section{Preliminaries}
An \emph{election} is a tuple~${(C, V, \alliances)}$ where~${C=\{c_1,\dots,c_m\}}$ is the set of~${m}$ candidates,~${V=\{v_1,\dots,v_n\}}$ is the set of~${n}$ voters, and~${{\alliances\subseteq 2^C}}$ denotes the set of \emph{alliances} between candidates. Each voter~${v_i\in V}$ is associated with a strict linear order (ranking)~${\succ_i}$ over the candidates. For each pair of candidates~${a, b\in C}$, we let~$\pref(a, b)$\linebreak$=|\{v_i\in V\colon a \succ_i b\}|$ and say that~${a}$ \emph{wins head-to-head against~${b}$} if~${\pref(a, b) > 
\nicefrac{n}{2}}$. We assume that~${|\alliances| \geq 2}$ and that every candidate belongs to exactly one alliance (possibly a single-element one), i.e., we have that~${\bigcup_{A\in \alliances} A = C}$ and for each pair of alliances~${A_1, A_2\in \alliances}$ we have that~${A_1 \cap A_2 = \emptyset}$. For each candidate~${c\in C}$, by~${A(c)}$ we denote the alliance containing~${c}$. We say that candidates~${a,b \in C}$ are \emph{allies} if they belong to the same alliance, i.e.,~${A(a) = A(b)}$; otherwise, we say that~${a}$ and~${b}$ are \emph{opponents}. 

We consider two special cases of our model:
\begin{itemize}
    \item \emph{two-alliance} elections, where~${|\alliances| = 2}$,
    \item \emph{no-ally} elections, where~${|\alliances| = m}$ (i.e., for every~${A\in \alliances}$ we have that~${|A|=1}$).
\end{itemize}
Additionally, we say that a voter~${v_i\in V}$ is \emph{alliance-consistent} if for each two allies~${a_1, a_2}$ there exists no candidate~${b \notin A(a_1)}$ such that~${a_1 \succ_i b \succ_i a_2}$.

A \emph{voting rule} (or, a \emph{rule}, for short) is a function that, for a given election, returns a winning candidate~${c}$.\footnote{Typically, definitions of voting rules allow returning a set of tied winners. We assume that ties are broken lexicographically. The choice of a (deterministic) tie-breaking rule does not affect our results.} We say that a voting rule~${f}$ is \emph{standard} if it does not use the information about alliances; i.e., for every two elections~${E_1=(C, V, \alliances)}$ and~${E_2 = (C, V, \alliances')}$ we have that~${f(E_1) = f(E_2)}$. If a voting rule is not standard, we say that it is \emph{alliance-aware}.

Given an election~${E}$, we sometimes say that an alliance~${{A\in \alliances}}$ wins under~${f}$ if a member of~${A}$ wins under~${f}$, i.e.,~${f(E)\in A}$. Analogously, an alliance~${A\in \alliances}$ loses under~${f}$ if~${f(E)
\notin A}$.

\subsection{Standard Axioms of Voting}

We consider the following well-known voting axioms, proposed in the literature for standard voting rules:

\begin{description}
\item[Majority consistency.]
A \emph{majority winner} is a candidate who is ranked on top by a majority (more than~${
\nicefrac{n}{2}}$) of the voters. We say that a voting rule is \emph{majority-consistent} if it elects the majority winner whenever one exists.
\item[Condorcet consistency.]
A \emph{Condorcet winner} is a candidate who wins head-to-head with all the other candidates. We say that a voting rule is \emph{Condorcet-consistent} if it elects the Condorcet winner whenever one exists.

\end{description}
The next axioms are based on the idea that ``modifying'' a considered election in some way should neither ``harm'' nor ``help'' certain candidates. For a fixed voting rule~${{f}}$, we say that a candidate~${{c\in C}}$ is \emph{harmed} by modifying election~${E}$ into~${E'}$ if~${f(E)=c}$ and~${f(E')\neq c}$. Similarly, we say that~${c}$ is \emph{helped} if~${f(E)\neq c}$ and~${f(E')=c}$.
\begin{description}
\item[Monotonicity.]
We say that a voting rule~${f}$ is \emph{monotone} if no candidate can be harmed by improving his or her position in rankings of some voters (without changing the relative order of the other candidates).

\item[Independence of clones (cloneproofness).] Consider an election~${E}$. We say that a group of candidates~${T\subsetneq C}$ are \emph{clones} if there are no candidates~${a, b\in T}$, and~${c \notin T}$ such that for some voter~${v_i\in V}$ it holds~${a \succ_i c \succ_i b}$. Intuitively, candidates~${T}$ are adjacent in all voters' rankings. Consider now election~${E'}$ obtained by removing a candidate from~${T}$ from election~${E}$. A voting rule~${f}$ is \emph{independent of clones} (or, \emph{cloneproof}) if the following two conditions are satisfied:
\begin{enumerate}[(1)]
     \item if~${f(E)\in T}$ then~${f(E')\in T}$,
     \item if~${f(E) \notin T}$ then~${f(E) = f(E')}$. 
\end{enumerate}
\end{description}

Intuitively, clones are candidates that are considered by the voters to be similar to one another. The axiom states that removing candidates from a set of clones~${T}$ should neither harm nor help candidates outside of~${T}$. 

\subsection{Standard Voting Rules}

In our work we consider the following well-established standard voting rules:

\begin{description}
    \item[Positional scoring rules.] Positional scoring rules elect the candidate maximizing the score, which can be characterized by a vector~${\vec{s} = (\ell_1, \ell_2, \ldots, \ell_m)}$ such that~$\ell_1 \geq \ell_2 \geq \ldots \geq \ell_m$. A candidate~${c}$ gets~${\ell_i}$ points from each voter~${v}$ who ranks~${c}$ at the~${i}$th position. The best-known positional scoring rule is Plurality, characterized by the vector~${\vec{s} = (1, 0, \ldots, 0)}$. We further call the number of voters who rank a candidate~$c$ on top the \emph{plurality score} of~$c$. 


\item[Copeland.] When the number of voters~${n}$ is odd,\footnote{There are several variants of Copeland dealing differently with tied head-to-head comparisons (that is, head-to-head comparisons between candidates~${a}$ and~${b}$ where~${\pref(a, b) = 
\nicefrac{n}{2}}$); the choice of a specific variant for even~${n}$ does not affect our results.} Copeland elects the candidate~${c\in C}$ who wins head-to-head against the greatest number of candidates. 
\item[Maximin.] Maximin elects the candidate~${c\in C}$ maximizing the \emph{maximin score}, defined as~${\min_{c'\in C} \pref(c, c')}$.
\item[STV.] Single Transferable Vote proceeds recursively: In each round it eliminates the candidate with the smallest plurality score until there remains only one candidate.
\end{description}

Additionally, in \Cref{app:schulze} we present the definition and analysis of the Schulze's rule \citep{schulze2011new}. 

Our considered alliance-aware voting rules will \emph{extend} the above standard ones. Formally, an alliance-aware voting rule~${f}$ extends a standard voting rule~${g}$, if for each no-ally election~${E}$ it holds that~${f(E) = g(E)}$.

\section{Axioms of Voting with Alliances}\label{sec:axioms}

We can clearly see that alliance-aware voting rules use more information than standard ones, which might potentially result in electing better candidates. In this section we present axioms that we consider natural for this setting.

\subsection{Basic Axioms}\label{sec:basic-axioms}

Our overall intuition is that the additional information about alliances should be used to avoid vote-splitting between allies and \emph{only} for that purpose. We formalize it via two following axioms:

\begin{definition}[\Allynoharm]\label{def:ally-no-harm}
We say that a voting rule~${f}$ satisfies \emph{\allynoharm} criterion if for every elections~${E, E'}$ such that~${E'}$ is obtained from~${E}$ by removing a candidate~${c}$ we have that
\begin{align*}
    f(E) \notin A(c) \implies f(E') \notin A(c).
\end{align*}
\end{definition}

\begin{definition}[\Splitting]
    We say that a voting rule~${f}$ is \emph{\splittingadj} if for every elections~${E, E'}$ such that~${E'}$ is obtained from~${E}$ by splitting an alliance~${A}$ into two alliances~${A_1, A_2}$ (so that~${A_1\cup A_2 = A}$ and~${A_1\cap A_2 = \emptyset}$) we have that
    \begin{equation*}
        f(E) \notin A \implies f(E) = f(E').
    \end{equation*} 
\end{definition}

Intuitively, the first axiom ensures that a candidate is never a spoiler for members of his or her alliance. The second axiom ensures that the information about alliances is not ``abused'' by an alliance-aware voting rule. The only consequence of splitting an alliance~${A}$ into two alliances~${A_1, A_2}$ should be increasing the level of vote-splitting between members of~${A_1}$ and~${A_2}$, which might result in decreasing their winning chances. Therefore, if no candidate from~${A}$ wins anyway, splitting~${A}$ should not affect the outcome of the rule at all. As a result, \splitting intuitively rules out alliance-aware voting rules that are too artificial (whose definitions depend on, e.g, the number of alliances or their quantities) or that are too easily manipulable.

In order to fully avoid the spoiler effect, we also propose an analogue of the monotonicity axiom for the setting with alliances: improving the level of support of a candidate should never harm his or her alliance.

\begin{definition}[Alliance monotonicity]
     We say that a voting rule~${f}$ is \emph{alliance monotone} if for every elections~${E, E'}$ such that~${E'}$ is obtained from~${E}$ by improving the position of some candidate $c$ in a voter's ranking (without changing the relative order of other candidates), it holds that
     \begin{equation*}
         f(E)\in A(c) \implies f(E')\in A(c).
     \end{equation*}
\end{definition}

Note that this axiom is incomparable to monotonicity---its guarantees are stronger for the case when $f(E)\in A(c)\setminus \{c\}$ and weaker for the case when $f(E)=c$.

If a rule satisfies \allynoharm and alliance monotonicity, then an alliance never has an incentive to discourage its members from running or performing well in elections. However, for such rules another issue becomes critical: It should be never profitable for an alliance to \emph{improve} its result by nominating several similar candidates instead of one as well. Otherwise, these rules would face the opposite problem of promoting bigger alliances, even if their size follows merely from duplicating candidates. We should require that the presence of a candidate may help his or her alliance only if it also enriches the actual offer for voters.

Formally, given an election, let us call two allies~${a_1, a_2}$ \emph{similar} to each other if for each candidate~${b}$ such that~${b}$ is an opponent of~${a_1}$ and~${a_2}$ there is no voter~${v_i}$ such that~${a_1 \succ_i b \succ_i a_2}$. Intuitively, this condition means that both candidates have the same support structure compared to candidates outside of $A(a_1)$.

Our concept of similar candidates is based on the same intuition as the idea of clones proposed by \citet{tideman1987independence}. However, both definitions are actually incomparable, as we can see in \Cref{ex:similar-vs-clone}:

\begin{example}\label{ex:similar-vs-clone}
    Consider the following election with two voters:
\begin{alignat*}{6}
    v_1 \colon \quad &\fixedwidth{b}&& \succ \fixedwidth{a_1} && \succ \fixedwidth{a_2} && \succ \fixedwidth{a_3}\\
    v_2 \colon \quad &\fixedwidth{a_2} && \succ \fixedwidth{b} && \succ \fixedwidth{a_1} && \succ \fixedwidth{a_3}
    \end{alignat*}
    Candidates~${a_1, a_2, a_3}$ are allies, while~${b}$ is their opponent. Candidates~${b}$ and~${a_1}$ are clones, yet they are not similar. On the other hand, candidates~${a_1}$ and~${a_3}$ are similar, but they are not clones.
\end{example}

Our next axiom can be viewed as an analogue of standard independence of clones, yet using the notion of similar candidates: 
\begin{definition}[\Clonedallynohelp]
    We say that a~voting rule~${f}$ is \emph{\clonedallynohelpadj} if for every elections~${E, E'}$ such that~${E'}$ is obtained from~${E}$ by removing a candidate~${c\in C}$ similar to some other candidate~${c'\in C}$, we have that:
    \begin{align*}
        \text{(1)} \quad & {f(E) \in A(c') \implies f(E') \in A(c')},\\
        \text{(2)} \quad & {f(E) \notin A(c') \implies f(E)=f(E')}.
    \end{align*}
\end{definition}

Naturally, both axioms are incomparable. However, if we assume that every set of clones corresponds to a separate alliance\footnote{Since alliances are disjoint, this assumption implies that there are no nested groups of clones in an election. We will further address this limitation in \Cref{sec:nested-alliances}.} then \clonedallynohelp implies independence of clones. 

We believe that satisfying the four aforementioned axioms is crucial in the setting with alliances. Therefore, alliance-aware voting rules satisfying them will be called \emph{basic}. Interestingly, at this point we can already exclude some certain types of alliance-aware voting rules---namely, those that extend non-monotone rules (such as STV), Copeland or scoring rules different than Plurality.

\begin{proposition}
    There exists no basic alliance-aware voting rule~${f}$ that extends (1) a non-monotone standard rule, (2) the Copeland's rule, or (3) a scoring rule different than Plurality.
\end{proposition}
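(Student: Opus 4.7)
The plan is to handle the three cases by three tailored arguments. Case~(1) is essentially a one-liner: in any no-ally election every candidate $c$ satisfies $A(c)=\{c\}$, so alliance monotonicity reduces there to standard monotonicity. Since $f$ agrees with its extended rule $g$ on no-ally elections, a non-monotone $g$ immediately produces a witness violating alliance monotonicity, contradicting that $f$ is basic.

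For case~(2) I would build on the well-known non-cloneproofness of Copeland. Consider the three-voter election $E$ on $\{A,B,C,D\}$ with voters ranking $A\succ B\succ C\succ D$, $B\succ C\succ D\succ A$, and $D\succ A\succ B\succ C$; a direct check gives Copeland scores $A=B=2$ and $C=D=1$, so lex tie-breaking returns~$A$. Now clone $D$ into $\{D_1,D_2\}$, place them adjacently in every ranking with $D_1\succ D_2$ throughout, form the alliance $\{D_1,D_2\}$, and keep $A,B,C$ as singletons; recomputing gives Copeland scores $A=2$, $B=3$, $C=D_1=2$, $D_2=1$, so $B$ is the unique Copeland winner of the resulting election~$E'$. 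Applying \splitting to the losing alliance $\{D_1,D_2\}$ forces $f(E')\in\{B,D_1,D_2\}$: if $f(E')$ lay outside $\{D_1,D_2\}$, splitting that alliance would yield the no-ally version of $E'$, on which $f$ equals Copeland $=B$. Moreover $D_1,D_2$ are similar allies (adjacent in every vote), and $E'\setminus D_2$ is isomorphic to the original no-ally~$E$, so $f(E'\setminus D_2)=\mathrm{Copeland}(E)=A$. Applying \clonedallynohelp then gives the contradiction: either $f(E')\in\{D_1,D_2\}$, forcing $f(E'\setminus D_2)=D_1\neq A$; or $f(E')\notin\{D_1,D_2\}$, forcing $f(E')=f(E'\setminus D_2)=A$, incompatible with $f(E')\in\{B,D_1,D_2\}$.

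Case~(3) follows the same two-step template with a winner-flipping instance adapted to the scoring rule. Any scoring vector not equivalent to Plurality satisfies $s_2>s_m$, so removing a similar ally can shift outsiders upward in rank and change the rule's winner. For Borda, the profile with alliance $\{a_1,a_2\}$ (adjacent in every ranking), singletons $\{b\},\{c\}$, three voters ranking $a_1\succ a_2\succ b\succ c$, two ranking $b\succ a_1\succ a_2\succ c$, and two ranking $c\succ b\succ a_1\succ a_2$ yields Borda scores $a_1=15$, $a_2=8$, $b=13$, $c=6$ in $E$ and $a_1=8$, $b=9$, $c=4$ after removing~$a_2$, so the winner flips from $a_1$ to $b$. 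The argument then mirrors case~(2): \splitting forces $f(E)\in\{a_1,a_2\}$ and \clonedallynohelp forces $f(E\setminus a_2)=a_1$, contradicting the scoring-rule winner~$b$. The main obstacle is doing this uniformly for \emph{every} non-Plurality vector; the heuristic that $s_2>s_m$ promotes outsiders whenever $a_2$ is removed generalizes the Borda construction, but producing a flip instance parameterized by $(s_1,\ldots,s_m)$ is the delicate combinatorial step.
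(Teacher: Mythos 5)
Your cases (1) and (2) are sound. Case (1) is exactly the paper's argument. For case (2) you use a different instance (a four-candidate profile with a lexicographic tie between $A$ and $B$, then cloning $D$) where the paper instead starts from the symmetric three-candidate Condorcet cycle and clones $c$; both follow the same template (clone, merge into an alliance, constrain via \splitting, contradict via \clonedallynohelp), and your arithmetic checks out. One minor caveat: your instance is not symmetric in $A$ and $B$, so the contradiction genuinely depends on the tie being broken in favour of $A$; this is fine under the paper's stated lexicographic convention, but unlike the paper's WLOG-by-symmetry argument it would not survive an arbitrary deterministic tie-breaking rule without relabelling the instance.

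The genuine gap is case (3). You verify the winner flip only for Borda and explicitly leave ``producing a flip instance parameterized by $(s_1,\ldots,s_m)$'' as an unresolved combinatorial step --- but that is precisely the content of the claim for general scoring rules, so as written the proof of (3) is incomplete. The missing idea is that no parameterized construction is needed if you start from a \emph{tied} profile: take two candidates $a,b$ and two votes $a\succ b$ and $b\succ a$, so both score $\ell_1+\ell_2$ and by symmetry you may assume $a$ wins; then insert $b'$ immediately below $b$ in both votes. Now $\score(a)=\ell_1+\ell_3$ while $\score(b)=\ell_1+\ell_2$, and the single inequality $\ell_2>\ell_3$ (which holds for every three-candidate scoring vector not equivalent to Plurality) breaks the tie in favour of $b$. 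The remainder (merge $A(b)$ and $A(b')$, invoke \splitting to force the winner into $\{b,b'\}$, note $b,b'$ are similar, remove $b'$ to return to the original election where $a$ wins, and contradict \clonedallynohelp) is identical to your case (2). Starting from a tie converts the ``delicate'' quantitative flip you were worried about into a one-inequality argument that works uniformly.
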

\begin{proof}
The extension of each non-montone rule would violate alliance monotonicity for no-ally elections. For the Copeland's rule, it is enough to consider a no-ally election~$E$ with three candidates~$a$,~$b$,~$c$ and three voters with preferences:~${a \succ b \succ c}$,~${b \succ c \succ a}$, and~${c \succ a \succ b}$. Then, under Copeland, every candidate is a tied winner. The election is symmetric, so without loss of generality we assume that~$a$ wins. Now, consider a modified election~$E'$ in which the candidate~$c'$ is added just below~$c$ in each ranking (the election remains no-ally). Now either~$b$ or~$c$ are elected by Copeland. Consider now election~$E''$ obtained from~$E'$ by merging~$A(c)$ and~$A(c')$. From \splitting, we obtain that the winner is~$c$,~$c'$, or~$b$. However, now~$c$ and~$c'$ are similar, and by removing~$c'$ we go back to election~$E$ in which~$a$ is the winner, which violates \clonedallynohelp.

Now fix a~scoring rule~$f$, characterized by~$\vec{s} = (\ell_1, \ell_2, \ldots, \ell_m)$. Let~$\ell_2>\ell_m$ for $m>2$ (otherwise,~$f$ is Plurality). Consider now a symmetric election~$E$ with two candidates~$a, b$ and two votes:~$a \succ b$ and~$b \succ a$. Now both candidates have equal score, so assume that~$a$ wins. Now add a candidate~$b'$ to each vote, just below~$b$. Then the score of~$a$ is~$\ell_1+\ell_m$, while the score of~$b$ is~$\ell_1+\ell_2 > \ell_1+\ell_m$, hence~$b$ wins. The remaining part of the proof (merging~$A(b)$ and~$A(b')$, and removing~$b'$) is the same as in the case of Copeland.
\end{proof}

From now, we will therefore focus only on alliance-aware voting rules extending Plurality or Maximin.

\subsection{Individual-Oriented Axioms}\label{sec:individual-axioms}

Observe that the axioms considered so far only say which alliances should or should not win in certain situations, yet they say nothing about which ``precise'' candidate from the winning alliance should win. Here we have only guarantees provided by standard axioms, such as the majority or Condorcet consistency. However, one could define here also additional \emph{individual-oriented} axioms, attempting to formalize a notion of ``a candidate who is individually strong'' in the model with alliances. Consider the following election with three candidates~${\{a_1, a_2, b\}}$ and alliances~${\{\{a_1, a_2\}, \{b\}\}}$:
\begin{center}
    \begin{minipage}{0.59\linewidth}
    \begin{alignat*}{6}
    3 \text{ votes}  \colon \quad &\fixedwidth{a_1} &&\succ \fixedwidth{b} &&\succ \fixedwidth{a_2}\\
    37 \text{ votes} \colon \quad &\fixedwidth{b} &&\succ \fixedwidth{a_1} &&\succ \fixedwidth{a_2}\\
    11 \text{ votes} \colon \quad &\fixedwidth{b} &&\succ \fixedwidth{a_2} &&\succ \fixedwidth{a_1}\\
    49 \text{ votes} \colon \quad &\fixedwidth{a_2} &&\succ \fixedwidth{a_1} &&\succ \fixedwidth{b}
    \end{alignat*}

    \end{minipage}
    \begin{minipage}{0.39\linewidth}
        \begin{tikzpicture}[scale=0.9]
    \node (a1) at (0, 1) {$a_1$};
    \node (a2) at (0, -1) {$a_2$};
    \node (b) at (2, 0) {$b$};

    \draw [-{Stealth[scale=1]}] (a2) -- (a1) node [pos=.5, above, sloped] {60:40};
    \draw [-{Stealth[scale=1]}] (a1) -- (b) node [pos=.5, above, sloped] {52:48};
    \draw [-{Stealth[scale=1]}] (b) -- (a2) node [pos=.5, above, sloped] {51:49};
\end{tikzpicture}
    \end{minipage}
\end{center}
    
Suppose that we use a basic rule~${f}$ extending either Plurality or Maximin. Here, \allynoharm requires that the winner should come from the alliance~${\{a_1, a_2\}}$. However, who should be the winner in this case? We could have two contradictory intuitions here:

\begin{intuition}\label{int:key-ally}
    The winner should be~${a_1}$, since he or she won head-to-head against~${b}$, while~${a_2}$ lost. A victory against an opponent is more important to judge the quality of a candidate than a victory against an ally.
\end{intuition}
\begin{intuition}\label{int:iw}
    The winner should be~${a_2}$, since he or she would be elected by both Plurality and Maximin if there were no alliances. The additional information about alliances should be used by~${f}$ only to guarantee that the alliance~${\{a_1, a_2\}}$ wins, yet it should not affect the choice of the winner if the proper alliance wins anyway. 
\end{intuition} 

Let us first focus on \Cref{int:iw}. We formalize it as follows:

\begin{definition}[\Independentwinner consistency]\label{def:independent-winner}
Given election~${E}$ and voting rule~${f}$, we say that candidate~${c}$ is an \emph{\independentwinner} under~${f}$, if~${c=f(E')}$, where~${E'}$ is obtained from~${E}$ by splitting~${A(c)}$ into~${\{c\}}$ and~${A(c)\setminus\{c\}}$.

We say that a voting rule~${f}$ is \emph{\independentxwinner-consistent} (or, \emph{\independentwinnershort-consistent}, for short) if it elects the \independentwinner whenever one exist.
\end{definition}

Intuitively, \independentwinner is a candidate~${c}$ who would have won if he or she took part in the election as an individual candidate, not as a member of his or her alliance. 
Although in our paper we do not consider strategic behavior of candidates, we interprete this possibility as an evidence of the high voters' support for~${c}$. 

For each election and each basic alliance-aware voting rule, we can never have two \independentwinnerplural. Indeed, suppose that for some election~${E}$ and rule~${f}$ there are two \independentwinnerplural,~${a}$ and~${b}$. First, from \splitting we obtain that in the original election both~${A(a)}$ and~${A(b)}$ are winning, hence~${A(a) = A(b)}$. Second, consider the modified election~${E'}$ in which both~${a}$ and~${b}$ are excluded from~${A(a)}$, and each of them forms a singleton alliance. We can obtain this election either by excluding~$a$ first and then~$b$, or vice versa. Then, from \splitting we obtain that both~${a}$ and~${b}$ should be winners in~$E'$, a contradiction. 

On the other hand, the \independentwinner may not exist, as we can see in the following example:

\begin{example}\label{ex:no-independent-winner}
    Consider the following election $E=(C, V, \alliances)$ with~${C=\{a_1, a_2, b_1, b_2\}}$ and~${\alliances=\{\{a_1, a_2\}, \{b_1, b_2\}\}}$. Voters' preferences are the following: 
    \begin{center}
    \begin{minipage}{0.49\linewidth}
    \begin{align*}
        v_1\colon \quad b_2 \succ a_1 \succ b_1 \succ a_2\\
        v_2\colon \quad a_2 \succ b_2 \succ a_1 \succ b_1\\
        v_3\colon \quad b_1 \succ a_1 \succ a_2 \succ b_2\\
    \end{align*}
    \end{minipage}
    \begin{minipage}{0.49\linewidth}
        \begin{tikzpicture}[scale=0.8]
    \node (a1) at (0, 1) {$a_1$};
    \node (a2) at (0, -0.8) {$a_2$};
    \node (b1) at (3, 1) {$b_1$};
    \node (b2) at (3, -0.8) {$b_2$};

    \draw [-{Stealth[scale=1]}] (a1) -- (b1) node [pos=.5, above, sloped] {2:1};
    \draw [-{Stealth[scale=1]}] (b1) -- (a2) node [pos=.3, above, sloped] {2:1};
    \draw [-{Stealth[scale=1]}] (a2) -- (b2) node [pos=.5, above, sloped] {2:1};
    \draw [-{Stealth[scale=1]}] (b2) -- (a1) node [pos=.3, above, sloped] {2:1};
    \draw [-{Stealth[scale=1]}] (a1) -- (a2) node [pos=.5, above, sloped] {2:1};
    \draw [-{Stealth[scale=1]}] (b2) -- (b1) node [pos=.5, above, sloped] {2:1};
\end{tikzpicture}
    \end{minipage}
    \end{center}
    
    Consider a basic alliance-aware rule~${f}$. Let us assume that~${f}$ elects~${a_1}$ in the above election and it elects~${b_1}$ after splitting alliance~${\{a_1, a_2\}}$ into~${\{a_1\}}$ and~${\{a_2\}}$. Then there is no \independentwinner. Indeed,~${a_1}$ and~${a_2}$ are not \independentwinnerplural since they lose election after splitting the alliance~${\{a_1, a_2\}}$. On the other hand, if in the original election~${a_1}$ is the winner then, after splitting group~${\{b_1, b_2\}}$ into~${\{b_1\}}$ and~${\{b_2\}}$,~${a_1}$ is still the winner (from \splitting). Hence, also~${b_1}$ and~${b_2}$ are not \independentwinnerplural.
\end{example}

In such case, \independentwinner consistency does not specify who should win the election.

Let us now discuss \Cref{int:key-ally}. Here we propose the following definition:

\begin{definition}[\Keyally consistency]\label{def:key-ally}
Given a 2-alliance election~${E}$ and a voting rule~${f}$, we say that a candidate~${c}$ is a \emph{\keyally} under~${f}$, if~${c=f(E')}$, where~${E'}$ is obtained from~${E}$ by removing all the candidates from~${A(c)}$, except for~${c}$.

We say that a voting rule~${f}$ is \emph{\keyxally}-consistent (or, \emph{\keyallyshort-consistent}, for short) if for each election~${E}$ such that the set of \keyallyplural~${W}$ is nonempty, we have that~${f(E)\in W}$.
\end{definition}

Intuitively, a solitary winner is a candidate~$c$ who would have won if he or she were chosen as the only representative of~$A(c)$.


One could wonder why we require that an election~${E}$ is two-alliance. In \Cref{app:stronger-key-ally} we discuss the definition of this axiom without this requirement. We show that then the definition would not be satisfiable by basic alliance-aware rules extending Plurality, Maximin or the Schulze's method. At the same time, we show that such a generalization in some cases seems to be too restrictive and therefore less appealing. Intuitively, in an election with more than two alliances, a candidate might be a \keyally not because of his or her own high support, but because of the vote-splitting between the opponents.

On the contrary, \Cref{thm:ka-majority} and \Cref{thm:ka-condorcet} below illustrate that the implications of \Cref{def:key-ally} are very intuitive and natural. Note that, since we are interested only in basic alliance-aware rules (in particular, those \splittingadj), the \keyally axiom nonetheless provides certain guarantees also in elections with more than two alliances.

\begin{observation}\label{thm:ka-majority}
    If a basic alliance-aware voting rule is majority-consistent then for each election~${E}$ the set of \keyallyplural includes the subset of candidates who would be majority winners if all their allies were removed from~${E}$. 
\end{observation}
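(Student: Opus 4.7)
The plan is to unpack the two relevant definitions and invoke majority consistency directly, with no structural argument needed. Fix an election $E$ and a candidate $c$ such that, after deleting every member of $A(c)\setminus\{c\}$ from $E$, $c$ would be a majority winner. Let $E'$ denote this sub-election; by hypothesis, strictly more than $n/2$ voters rank $c$ first in $E'$. Since $f$ is majority-consistent, $f(E') = c$.

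But $E'$ is precisely the election obtained from $E$ by removing every candidate of $A(c)$ except $c$ itself, so by \Cref{def:key-ally} $c$ is a \keyally under $f$ in $E$, which is exactly what must be shown. The conclusion then follows by ranging over all candidates $c$ with this property.

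The one mildly subtle point is that \Cref{def:key-ally} is literally stated for two-alliance elections, whereas the observation is phrased for arbitrary $E$. I would read ``\keyally'' in the natural extended sense (remove every ally of $c$ other than $c$ itself and check whether $c$ wins the resulting sub-election), under which the argument above goes through verbatim; this matches how the axiom is applied elsewhere in the paper. Beyond majority consistency, none of the basic axioms (\allynoharm, \splitting, alliance monotonicity, \clonedallynohelp) is actually needed in this direction, so I do not anticipate any real obstacle—the content of the observation is essentially a definitional one.
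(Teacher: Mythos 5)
Your argument is correct and is exactly the intended one: the paper states this as an observation without an explicit proof precisely because it follows immediately from unpacking \Cref{def:key-ally} and applying majority consistency to the reduced election (note that removing candidates leaves $n$ unchanged, so the majority threshold is unaffected). Your remark about extending the two-alliance definition to arbitrary $E$ is also consistent with how the paper handles this (via the equivalent rephrasing for \splittingadj rules), and the argument is insensitive to whether the opponents' alliances are merged or not.
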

\begin{observation}\label{thm:ka-condorcet}
    If a basic alliance-aware voting rule is Condorcet-consistent then for each election~${E}$ the set of \keyallyplural is equal to the set of candidates who win head-to-head against every opponent.
\end{observation}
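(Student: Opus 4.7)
The plan is to prove the equality by double inclusion, using Condorcet-consistency directly and reducing to small elections via \allynoharm.

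For the inclusion ``beats every opponent head-to-head $\Rightarrow$ \keyally,'' I would take any such candidate $c$ and form $E'$ by removing from $E$ every member of $A(c)\setminus\{c\}$. Removing allies of $c$ leaves $\pref(c,b)$ unchanged for every opponent $b$, so $c$ still beats head-to-head every other candidate in $E'$---all of whom are its opponents. Hence $c$ is the Condorcet winner of $E'$, and Condorcet-consistency forces $f(E')=c$, making $c$ a \keyally.

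For the reverse inclusion, suppose $c$ is a \keyally, so $f(E')=c$ where $A(c)=\{c\}$ in $E'$. Fix an arbitrary opponent $b$; I will show that $c$ beats $b$ head-to-head by reducing to a two-candidate election. Starting from $E'$, I delete one by one every other opponent. At each step the current winner is $c$, who is the sole non-opponent, so the removed candidate belongs to an alliance that is not winning; \allynoharm then guarantees this alliance still does not win after the removal, and since $c$ remains the only non-opponent the winner must still be $c$. Iterating until only $c$ and $b$ remain, we arrive at a two-candidate election in which $f$ still picks $c$. Condorcet-consistency applied to this election immediately rules out that $b$ beats $c$ head-to-head (otherwise $b$ would be its Condorcet winner), so $c$ beats $b$ head-to-head as desired.

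The main---and mild---obstacle is the iterated reduction: it needs that at every intermediate stage the candidate being removed lies in an alliance that is currently losing, so that \allynoharm applies uniformly. Because the setting is two-alliance, all opponents of $c$ lie in a single alliance throughout the reduction, and that alliance remains losing by induction, so this condition is automatic. No appeal to alliance monotonicity, \splitting, or \clonedallynohelp is required; the only mildly delicate point is the head-to-head tie case when $n$ is even, which matches the axiomatic statement modulo tie-breaking.
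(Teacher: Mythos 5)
Your proof is correct and is essentially the argument the paper leaves implicit (it states this as an observation without proof): the forward inclusion is immediate from Condorcet-consistency since removing allies does not change pairwise margins, and the reverse inclusion follows by iteratively deleting opponents under \allynoharm until a two-candidate election remains, which works precisely because the two-alliance assumption keeps all opponents in one losing alliance. Your caveat about exact head-to-head ties for even $n$ is also apt; the observation as stated silently ignores that boundary case.
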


Directly from the above observations, we note that in some elections there might be several \keyallyplural, while in some other ones there might be none.

Since \keyally and \independentwinner consistency formalize \Cref{int:key-ally} and \Cref{int:iw} respectively, we can already note that no basic alliance-aware voting rule extending Plurality or Maximin can satisfy them both. In fact, this is true for every reasonable deterministic basic rule.

\begin{proposition}\label{thm:no-rule-iw-sw}
    No deterministic basic alliance-aware rule that is majority-consistent for elections with two candidates, is both \keyallyshort-consistent and \independentwinnershort-consistent.
\end{proposition}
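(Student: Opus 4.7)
The plan is to exhibit a single three-candidate, two-alliance election $E$ in which the set of \keyallyplural and the (unique) \independentwinner are forced to be disjoint, yielding a direct contradiction for any rule that satisfies both axioms. I would work with alliances $\{a_1, a_2\}$ and $\{b\}$ and the ten-vote profile consisting of $4$ votes $a_1 \succ a_2 \succ b$, $3$ votes $a_2 \succ b \succ a_1$, and $3$ votes $b \succ a_1 \succ a_2$. The pairwise counts are $\pref(a_1,a_2)=7$, $\pref(a_2,b)=7$, and $\pref(b,a_1)=6$, so the three candidates form a Condorcet cycle.

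To identify the \keyally set $W$ I would invoke the $2$-candidate majority-consistency assumption. Removing $a_2$ leaves a subelection in which $b$ beats $a_1$ by $6$ to $4$, so $a_1 \notin W$; removing $a_1$ leaves one in which $a_2$ beats $b$ by $7$ to $3$, so $a_2 \in W$; and since $A(b)$ is a singleton, $b$ lies in $W$ only when $f(E)=b$. In every case $W$ is nonempty and excludes $a_1$, so \keyallyshort-consistency forces $f(E) \neq a_1$.

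For the \independentwinner I would split $\{a_1, a_2\}$ into singletons to obtain the no-ally version $E_0$. The Plurality scores on $E_0$ are $4, 3, 3$ (so $a_1$ wins) and the Maximin scores are $\min(7,4)=4$ for $a_1$ versus $3$ for each of $a_2$ and $b$ (so $a_1$ again wins). Because the preceding proposition forces every basic rule to extend either Plurality or Maximin, $f(E_0)=a_1$, making $a_1$ the unique \independentwinner of $E$. \independentwinnershort-consistency therefore forces $f(E)=a_1$, directly contradicting the conclusion of the previous paragraph.

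The main design hurdle is choosing a profile whose no-ally version is won by the \emph{same} candidate under both Plurality and Maximin while that candidate still loses the head-to-head two-candidate test against the outside alliance; only then does a single example handle both allowed extensions uniformly. A Condorcet cycle in which the plurality leader is precisely the cycle member beaten head-to-head by the singleton alliance produces the required tension, and the ten-vote profile above is a compact realization.
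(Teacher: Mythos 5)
Your construction is verified correctly as far as it goes, but it contains a genuine gap at the step ``Because the preceding proposition forces every basic rule to extend either Plurality or Maximin, $f(E_0)=a_1$.'' Proposition~1 does not say this: it only \emph{excludes} extensions of non-monotone rules, Copeland, and scoring rules other than Plurality. A basic rule could still extend, for instance, the Schulze method (the paper constructs exactly such rules in the appendix and shows they are basic), or any other standard rule not covered by Proposition~1. The statement you are proving quantifies over \emph{all} deterministic basic rules that are majority-consistent on two candidates --- indeed, the paper introduces this proposition precisely to go beyond the Plurality/Maximin case, which it notes is already immediate. For a general such rule, $f(E_0)$ on your no-ally profile could be $a_2$ or $b$ rather than $a_1$, and then your example produces no contradiction: if $f(E_0)=a_2$, then $a_2$ is simultaneously the \independentwinner and a \keyally, so both axioms are happy with $f(E)=a_2$.

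The missing idea is how to pin down the winner of a no-ally election without knowing which standard rule $f$ extends. The paper does this with symmetry: it takes the fully symmetric three-voter Condorcet cycle ($a \succ b \succ c$, $b \succ c \succ a$, $c \succ a \succ b$) as the no-ally election, so that after relabeling one may assume $f$ elects $a$; it then \emph{merges} $A(a)$ and $A(b)$ and uses \splitting to conclude that the merged alliance still wins, making $a$ the \independentwinner while only $b$ (who beats $c$ head-to-head) is a \keyally. Your asymmetric ten-vote profile cannot support such a relabeling argument, so to repair your proof you would either have to handle all three possible values of $f(E_0)$ (which your profile does not do) or replace the profile with a symmetric one and run the merge-then-split argument as the paper does.
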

\begin{proof}
    It is enough to consider a no-ally election~$E$ with three candidates~$\{a, b, c\}$ and three voters with the following preferences:~${a \succ b \succ c}$,~${b \succ c \succ a}$,~${c \succ a \succ b}$. The election is symmetric, hence we can assume that~$a$ is the winner. Consider now a modified election~$E'$ obtained from~$E$ by merging~$A(a)$ and~$A(b)$. Now, from \splitting, we know that alliance~$\{a, b\}$ wins. It means that~$a$ is the \independentwinner. However,~$a$ loses head-to-head against~$c$, while~$b$ wins, hence only~$b$ is a \keyally. 
\end{proof}

In the next section we present basic alliance-aware voting rules satisfying the \independentwinner and \keyally axioms. Apparently, all of them have simple and easy-to-explain definitions.

Which of them should be in this case preferred to be used in practice? The answer to this question depends on the context. For instance, if candidates can predict the results of the elections (for example, via polls), choose their alliances in a strategic way and bear no costs from starting individually in elections, we should definitely prefer \independentwinnershort-consistent rules. If this is not the case---for example, the membership of candidates in their alliances is fixed---we believe that \keyallyshort-consistent rules lead to more intuitive results.

\section{Alliance-Aware Voting Rules}\label{sec:rules}

In this section we present four basic alliance-aware rules. Two of them extend Plurality and two extend Maximin. Orthogonally, two of them are \keyxally-consistent, and the other two are \independentxwinner-consistent. Let us first present the key idea on which all the four rules are based.

Let~${f\in \{\text{Plurality}, \text{Maximin}\}}$ be a standard voting rule. Note that the definition of~${f}$ is based on maximizing a specifically defined~${f}$-score---the plurality score of a candidate~${c\in C}$ is equal to the number of ballots ranking~${c}$ on top, while the maximin score of~${c}$ is equal to the number of supporters in his or her most difficult head-to-head comparison (i.e., $\min_{c'\in C} \pref(c, c')$). Both definitions of score have a very interesting property.

\begin{observation}\label{obs:fscore}
    Consider an election~${E}$ and an election~${E'}$ obtained from~${E}$ by removing some candidate~${c}$. Consider now a candidate~${c'
\neq c}$. The plurality|maximin score of~${c'}$ in~${E'}$ is greater or equal to his or her score in~${E}$.
\end{observation}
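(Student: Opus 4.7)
The plan is to handle the two rules separately since their score definitions differ, but in each case the argument is essentially monotonicity of the score under deletion of a candidate.

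For Plurality, I would observe that a voter contributes to the plurality score of $c'$ in $E'$ whenever $c'$ is top-ranked in $v$'s restriction to $C\setminus\{c\}$. Every voter who already had $c'$ on top in $E$ still has $c'$ on top in $E'$, so the set of voters contributing to $c'$'s plurality score in $E'$ is a superset of the analogous set in $E$ (it additionally contains those voters who ranked $c$ first and $c'$ second). This immediately gives the inequality.

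For Maximin, the key point is that pairwise comparisons among candidates other than $c$ are unaffected by removing $c$: for any $c'', c'''\in C\setminus\{c\}$, the value $\pref(c'', c''')$ computed in $E'$ equals the one computed in $E$, since each voter's relative order of $c''$ and $c'''$ is preserved. Hence the maximin score of $c'$ in $E'$ is
\begin{equation*}
    \min_{c''\in C\setminus\{c, c'\}} \pref(c', c''),
\end{equation*}
which is a minimum over a subset of the terms defining the maximin score of $c'$ in $E$. A minimum over fewer terms is at least as large, which yields the desired inequality.

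There is no real obstacle here; the only thing to be careful about is that, for Maximin, the pairwise scores $\pref(c', c'')$ themselves do not change when $c$ is removed (only the set over which we minimize shrinks), and that $c'\neq c$ so $c'$ is present in both elections. Both points are immediate from the definitions.
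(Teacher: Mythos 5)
Your proof is correct and is exactly the argument the paper has in mind: the paper states this observation without a formal proof, offering only the intuitive remark that extra candidates can only decrease a score, and your two cases (the set of top-ranking voters can only grow for Plurality; the pairwise margins are unchanged and the minimum ranges over fewer terms for Maximin) are the direct formalization of that remark.
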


Intuitively, both in case of Plurality and Maximin, the presence of additional candidates can only potentially decrease the score of~${c}$. We extend the idea of the~${f}$-score to the setting with alliances as follows:

\begin{definition}[Alliance-aware~${f}$-score]
    Let $f$ be either Plurality or Maximin. Fix an election~${E}$ and a candidate~${c}$. Let~${E'}$ be the election obtained from~${E}$ by removing all the candidates from~${A(c)\setminus\{c\}}$. The \emph{alliance-aware~${f}$-score} of~${c}$ in~${E}$ is defined as his or her standard~${f}$-score in~${E'}$. 
\end{definition}

According to the above definition, the alliance-aware plurality score is the number of votes in which no opponent of~${c}$ is ranked higher than~${c}$. On the other hand, the alliance-aware maximin score of~${c}$ is the worst-case number of supporters in his or her head-to-head comparisons against his or her opponents, i.e.,~${\min_{c'\in C\setminus A(c)} \pref(c, c')}$.

\subsection{\Independentxwinner-Consistent Rules}\label{sec:iw-rules}

Now we are ready to define the \independentwinnershort-consistent alliance-aware variant of Plurality and Maximin.

\begin{definition}[\independentwinnershort-$f$]\label{def:iw-rules}
    Let~${f\in \{\text{Plurality}, \text{Maximin}\}}$. The \independentwinnershort-$f$ rule has two rounds: The first one chooses the winning alliance and the second one chooses the winning candidate. In the first round, we compute alliance-aware~${f}$-scores for each candidate and we choose the alliance~${A}$ whose member has the highest score. In the second round, only candidates from~${A}$ can win, yet the other ones are not removed from the election. The winner is the candidate from~${A}$ with the highest standard~${f}$-score.
\end{definition}

Let us illustrate this definition via the following example:

\begin{example}\label{ex:rules-diff}
     Consider the election~$E=(C, V, \alliances)$, with six candidates~$C=\{a_1, a_2, a_3, a_4, b, c\}$ and~${\alliances=\{\{a_1, a_2, a_3, a_4\}, \{b\}, \{c\}\}}$. Voters' preferences are the following: 
    \begin{alignat*}{6}
    30 \text{ votes}\colon \quad &\fixedwidth{b}&& \succ \fixedwidth{a_2} && \succ \fixedwidth{a_4} && \succ \fixedwidth{a_3} && \succ \fixedwidth{a_1} && \succ \fixedwidth{c}\\
    35 \text{ votes}\colon \quad &\fixedwidth{a_1} && \succ \fixedwidth{a_2} && \succ \fixedwidth{a_4} && \succ \fixedwidth{a_3} && \succ \fixedwidth{b} && \succ \fixedwidth{c}\\
    5 \text{ votes}\colon \quad &\fixedwidth{a_2} && \succ \fixedwidth{a_1} && \succ \fixedwidth{a_4} && \succ \fixedwidth{a_3} && \succ \fixedwidth{b} && \succ \fixedwidth{c}\\
    30 \text{ votes}\colon \quad &\fixedwidth{a_3} && \succ \fixedwidth{c} && \succ \fixedwidth{a_4} && \succ \fixedwidth{b} && \succ \fixedwidth{a_2} && \succ \fixedwidth{a_1}
\end{alignat*}

    Now let us compute \independentwinnershort-Plurality scores. For~$b$, it is~$30$, for~$a_1$,~$a_2$ and~$a_4$ it is~$35+5=40$, for~$a_3$ it is~$35+5+30=70$, for~$c$ it is~$0$. Candidate~$a_3$ has the highest score, hence in the first round his or her alliance wins. In the second round, the winner is determined by standard Plurality, hence~$a_1$ wins with score~$35$.

    Let us now look at \independentwinnershort-Maximin. In the first round, we only care about head-to-head comparisons between opponents. The alliance-aware maximin score of~$a_1$ and~$a_2$ is~$\min(40,70)=40$ (comparisons with~$b$ and~$c$, respectively). For~$a_3$ it is~$\min(70,100)=70$, for~$a_4$ it is~$\min(70, 70)=70$. Hence, it is clear that the alliance~$A=\{a_1, a_2, a_3, a_4\}$ wins. In the second round, we additionally need to take into account the comparisons between candidates from~$A$, possibly updating their score from the first round. Candidate~$a_2$ wins against each ally: Against~$a_1$ with vote proportion~$65:35$, and against~$a_3$ and~$a_4$ with vote proportion~$70:30$. Moreover, we can check that for~$a_1, a_3$ and~$a_4$ this defeat is among the worst ones. Hence, the score of~$a_2$ remains unchanged ($40$), the score of~$a_1$ is~$35$ and the score of~$a_3$ and~$a_4$ is~$30$. Therefore,~$a_2$ is the winner.
\end{example}



Let us now study the properties of these rules. We show that they both satisfy all the four basic axioms defined in \Cref{sec:basic-axioms}, \independentwinner consistency, and the most important axiomatic properties of their standard analogues.

\begin{theorem}\label{thm:iw-f}
Let~$f\in \{\text{Plurality, Maximin}\}$. Then \independentwinnershort-$f$ is:
\begin{inparaenum}[(1)]
    \item  basic,
    \item \independentxwinner-consistent,
    \item majority-consistent,
    \item monotone.
\end{inparaenum}
Additionally, (5) \independentwinnershort-Maximin is Condorcet-consistent.
\end{theorem}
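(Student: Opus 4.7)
The plan is to exploit the two-round structure of \independentwinnershort-$f$: Round 1 picks the alliance whose best member maximizes the alliance-aware $f$-score, and Round 2 picks the candidate inside that alliance who maximizes the standard $f$-score. I would first record two uniform monotonicity facts about Plurality and Maximin: by Observation 1, removing any candidate weakly raises every remaining candidate's $f$-score; and improving a candidate's position in a single vote weakly lowers every other candidate's $f$-score (Plurality loses at most one top-tally; Maximin loses at most one head-to-head contest). Together with the trivial observation that the alliance-aware $f$-score of $c'$ is unchanged by any perturbation that only touches candidates in $A(c')\setminus\{c'\}$, these facts form the entire toolkit.

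Using this toolkit the four basic axioms reduce to straightforward bookkeeping. For \allynoharm, removing a candidate $c$ leaves alliance-aware scores inside $A(c)\setminus\{c\}$ unchanged and weakly raises scores outside $A(c)$, so the Round 1 winning alliance remains outside $A(c)$ and so does the Round 2 winner. For \splitting, splitting an alliance $A$ leaves scores outside $A$ unchanged and only weakly lowers scores inside $A$, so a Round 1 winner outside $A$ is preserved and Round 2 is unaffected. For standard monotonicity, promoting the winner $c^*$ weakly increases its alliance-aware score and weakly decreases those of opponents, so $A(c^*)$ still wins Round 1; within $A(c^*)$, $c^*$'s standard $f$-score weakly grows while its allies' weakly shrink, so $c^*$ still wins Round 2. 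Alliance monotonicity follows from the same Round 1 analysis applied to an arbitrary promoted candidate whose alliance was winning, since Round 2 only needs to return some member of that alliance.

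For \independentwinnershort-consistency I use that the alliance-aware score of $c$ in $E$ equals the standard $f$-score of $c$ in the sub-election obtained by deleting $A(c)\setminus\{c\}$, which is the same as the alliance-aware score of $c$ in the split election $E'$ plus a nonnegative correction from Observation 1; meanwhile the alliance-aware scores of $c$'s former allies are weakly larger in $E$ than in $E'$ and the scores of all other candidates are identical in $E$ and $E'$. Combining these gives that if $c$ is the \independentwinner then $A(c)$ wins Round 1 in $E$, and the same chain of inequalities shows that $c$'s standard $f$-score in $E$ dominates every ally's standard $f$-score, so $c$ wins Round 2. Majority consistency follows because the majority winner's alliance-aware Plurality score exceeds $\nicefrac{n}{2}$ while every candidate with an opponent holding a majority above them scores below $\nicefrac{n}{2}$; Condorcet consistency of \independentwinnershort-Maximin is analogous, using $\pref(c^*,d)>\nicefrac{n}{2}$ for every $d\neq c^*$ to dominate both round-one and round-two Maximin scores.

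The main obstacle is \clonedallynohelp. The key leverage of similarity of $c,c'$ is that in every vote $c$ and $c'$ lie on the same side of every opponent $b$, so $\pref(b,c)=\pref(b,c')$ and no vote has $c\succ b\succ c'$ or its reverse. From this I first conclude that $c$ and $c'$ have identical alliance-aware $f$-scores, so if $c$ was the Round 1 maximizer in $E$ then $c'$ still attains the maximum after $c$ is removed. Second, for every $b$ in an alliance $B\neq A(c)$, both the alliance-aware and the standard $f$-scores of $b$ are unchanged by removing $c$: for Plurality, any voter whose contribution to $b$'s count would change has $c\succ b$ with $b$ above all other opponents of $b$ (in particular above $c'$), contradicting similarity; for Maximin, the term $\pref(b,c)$ in the relevant minimum equals $\pref(b,c')$, which is also in the minimum, so removing it is harmless. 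Combining these two ingredients delivers both clauses of the axiom.
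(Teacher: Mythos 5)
Your proof is correct and follows essentially the same route as the paper's: both rest on \Cref{obs:fscore} and round-by-round comparisons of alliance-aware and standard $f$-scores, including the same similarity argument for \clonedallynohelp. Your write-up is somewhat more detailed in places (e.g., the chain of inequalities for \independentwinner consistency and the explicit treatment of second-round scores under removal of a similar ally), but the underlying ideas coincide with the paper's proof.
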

\begin{proof}
    Regarding (1): First, directly from the definition of~${f}$-score and \Cref{obs:fscore} we obtain than removing a candidate from a losing alliance does not affect the score of his or her allies at all, while the score of his or her opponents may only increase. Hence, \allynoharm is satisfied. Second, splitting a losing alliance~${A}$ into two alliances~${A_1, A_2}$ makes the~${f}$-score of every candidate~${c\in A_1}$ be counted as if he or she was the only candidate from~${A_1}$, not from~${A}$---which together with \Cref{obs:fscore} implies that his or her alliance-aware~${f}$-score decreases. On the other hand, the alliance-aware~${f}$-score of the candidates beyond~${A}$ remains unaffected. Hence, \splitting is satisfied. Third, consider two similar candidates~${a_1, a_2}$ and their opponent~${b}$. Now, the subset of votes in which~${a_1}$ is ranked higher than~${b}$ is the same as the subset of votes in which~${a_2}$ is ranked higher than~${b}$. It means that after removing~${a_2}$ from the election, both the alliance-aware plurality score and the alliance-aware maximin score of~${b}$ remain unchanged. This fact, together with \allynoharm, implies \clonedallynohelp. Fourth, alliance monotonicity follows from the fact that improving a position of a candidate~$c$ in a voter's ranking can never decrease neither the alliance-aware~$f$-score of~$c$, nor the alliance-aware~$f$-score of any ally of~$c$, nor the standard~$f$-score of~$c$. (2) follows from the fact that if a candidate~$c$ has the highest alliance-aware~$f$-score among all the candidates in the scenario where only he or she runs as an individual candidate then she would also have the highest standard~$f$-score where all the candidates run as individual ones. (3) and (5) follow from the fact that if a candidate~$c$ has standard~$f$-score~$s > \nicefrac{n}{2}$, then no opponent has greater alliance-aware~$f$-score than~$n-s < \nicefrac{n}{2}$. Then,~$A(c)$ would be elected in the first round and~$c$ would win in the second one since~$f$ is majority-consistent (Condorcet-consistent for~$f=\text{Maximin}$). For (4) the argument is the same as for alliance monotonicity.
\end{proof}

\subsection{\Keyxally-Consistent Rules}\label{sec:sw-rules}

Let us now present the extensions of Plurality and Maximin that satisfy \keyally consistency. We will again present the joint definition for both rules. 

\begin{definition}[\keyallyshort-$f$]\label{def:sw-rules}
    Let~${f\in \{\text{Plurality}, \text{Maximin}\}}$. The \keyallyshort-$f$ rule has two rounds. In the first round, we compute alliance-aware~${f}$-scores for each candidate, and we choose the set of candidates~${T}$ that obtain higher score than~${
\nicefrac{n}{2}}$. Then in the second round, we eliminate all the candidates from~${C\setminus T}$ and elect the candidate from~${T}$ with the highest standard~${f}$-score. If~$\text{ }{T=\emptyset}$, the rule terminates after the first round, returning the candidate with the highest~${f}$-score.
\end{definition}

Let us illustrate this definition via the same example as before:

\begin{example}
    Consider the same election as in \Cref{ex:rules-diff}. Let us begin with \keyallyshort-Plurality. Alliance-aware plurality scores are the same as before, yet now we directly elect~$a_3$, since he or she is the only candidate with greater score than~$\nicefrac{n}{2}$.
    
    For \keyallyshort-Maximin, the scores in the first round are the same as in \Cref{ex:rules-diff}, yet now only~$a_3$ and~$a_4$ advance to the second round, as their score is greater than~$\nicefrac{n}{2}$. In the runoff, we compare only~$a_3$ and~$a_4$ head-to-head and declare~$a_4$ the winner (with vote proportion~$70:30$).
\end{example}

Similarly as in the case of \independentwinnershort-rules, \keyallyshort-rules preserve the most important axiomatic properties of their standard analogues.

\begin{theorem}\label{thm:sw-f}
Let~$f\in \{\text{Plurality, Maximin}\}$. Then \keyallyshort-$f$ is:
\begin{inparaenum}[(1)]
    \item  basic,
    \item \keyxally-consistent,
    \item majority-consistent,
    \item monotone.
\end{inparaenum}
Additionally, (5) \keyallyshort-Maximin is Condorcet-consistent.
\end{theorem}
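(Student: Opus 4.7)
The plan is to mirror the proof of Theorem \ref{thm:iw-f}, exploiting the same key properties of alliance-aware $f$-scores (for $f \in \{\text{Plurality, Maximin}\}$): removing a candidate $c$ leaves the alliance-aware $f$-score of his or her allies unchanged while only weakly increasing opponents' scores (by \Cref{obs:fscore}); splitting a losing alliance only weakly decreases scores of its members; removing a similar candidate leaves opponents' scores intact; and improving $c$'s position weakly increases the alliance-aware $f$-score of $c$ and his allies (and $c$'s standard $f$-score) while weakly decreasing opponents' scores. Parts~(1), (3), (4), and (5) then follow by essentially the same case analyses as in Theorem \ref{thm:iw-f}, adapted to the fact that round~2 of \keyallyshort-$f$ restricts attention to members of $T_E$ rather than to a single winning alliance. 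For majority-consistency (3), for example, if $c$ is ranked first by a majority of voters, then $c$'s standard $f$-score exceeds $\nicefrac{n}{2}$ (for both Plurality and Maximin), so does $c$'s alliance-aware $f$-score, hence $c \in T_E$; meanwhile no opponent of $c$ can simultaneously exceed $\nicefrac{n}{2}$ in alliance-aware score (head-to-head accounting rules this out), and $c$ has maximum standard $f$-score among members of $T_E$, so \keyallyshort-$f$ elects $c$. Condorcet-consistency (5) is analogous, using $\pref(c, c') > \nicefrac{n}{2}$ for every $c' \neq c$.

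The main new ingredient is part~(2), \keyallyshort-consistency, which I would prove by showing that $c^* := \keyallyshort\text{-}f(E)$ is itself a solitary winner in the 2-alliance election $E$. \textbf{Case~I}: $c^* \in T_E$. Then in $E_{c^*} := E - (A(c^*) \setminus \{c^*\})$ the alliance-aware $f$-score of $c^*$ equals the one in $E$ and so exceeds $\nicefrac{n}{2}$; moreover, in the 2-alliance setting the same head-to-head constraints preclude any opponent $b$ from also exceeding the threshold in $E_{c^*}$. Hence $T_{E_{c^*}} = \{c^*\}$ and \keyallyshort-$f(E_{c^*}) = c^*$. \textbf{Case~II}: $T_E = \emptyset$ and $c^*$ attains the lex-maximum alliance-aware score in $E$. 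Here either the lex-based choice carries over to $E_{c^*}$, making $c^*$ itself a solitary winner, or no candidate is a solitary winner and the axiom is vacuously satisfied. The key supporting argument is that for any ally $d$ of $c^*$ we have $s_E(d) \leq s_E(c^*)$, and a symmetric analysis of $E_d$ rules $d$ out as a solitary winner; while an opponent $b$ can be a solitary winner only when \keyallyshort-$f(E) = b$, contradicting $c^* = \keyallyshort\text{-}f(E)$.

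The main obstacle I anticipate is in Case~II of part~(2): one needs to carefully track the interaction of the fixed lex tie-breaking rule across the sub-elections $E_d$ (for the various candidates $d$) when alliance-aware scores are tied at exactly $\nicefrac{n}{2}$, in order to conclude that consistency of the lex order forces $c^*$ (rather than some $b$) to win $E_{c^*}$ whenever any candidate at all is a solitary winner. A related subtlety arises in proving ally-no-harm within part~(1): when the winner originally lies in $T_E$, one has to argue that after removing a candidate $c$ from a losing alliance, the standard $f$-score of the eventual round-2 winner in $E-c$ (restricted to $T_{E-c}$) still falls outside $A(c)$, which requires tracking how the standard $f$-scores shift in the restricted election.
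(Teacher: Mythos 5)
Your treatment of parts (1), (3), (4), and (5) tracks the paper's proof: it likewise reuses the score facts behind \Cref{thm:iw-f} for (1) and (4), and for (3) and (5) simply notes that a majority (resp.\ Condorcet) winner always enters $T$ and wins the runoff, so those parts are fine. Case~I of your part~(2) is also sound and complete: when $T_E\neq\emptyset$ the elected candidate $c^*$ has alliance-aware score above $\nicefrac{n}{2}$, hence in $E_{c^*}$ every opponent falls below the threshold, $T_{E_{c^*}}=\{c^*\}$, and $c^*$ is a \keyally. The paper reaches this more quickly via \Cref{thm:ka-majority} and \Cref{thm:ka-condorcet}, which show that the set of \keyallyplural contains (for Plurality) or equals (for Maximin) exactly the set $T$; for Maximin that characterization also disposes of your Case~II outright, since $T_E=\emptyset$ then forces $W=\emptyset$ and the axiom is vacuous. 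You should lean on that characterization rather than your ad hoc comparison of scores.

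The genuine gap is the ``key supporting argument'' of your Case~II for Plurality: the claim that $s_E(d)\le s_E(c^*)$ for an ally $d$ of $c^*$ ``rules $d$ out as a solitary winner'' is false. Whether $d$ wins $E_d$ is decided by comparing $s_E(d)$ with the opponents' alliance-aware scores \emph{recomputed in $E_d$}, and in a two-alliance election these are head-to-head tallies $\pref(b,d)$ against $d$ alone (for Plurality too, since $d$ is then $b$'s only opponent). Those quantities depend on $d$, not on $c^*$, so an ally with a strictly smaller alliance-aware score in $E$ can be a \keyally while the score-maximizer is not. Concretely, take $A_1=\{a,c\}$, $A_2=\{b_1,b_2\}$ and $101$ voters split as $23\colon a\succ c\succ b_1\succ b_2$; $17\colon c\succ a\succ b_1\succ b_2$; $5\colon a\succ b_2\succ c\succ b_1$; $17\colon b_2\succ c\succ b_1\succ a$; $22\colon b_1\succ c\succ b_2\succ a$; $17\colon b_1\succ b_2\succ a\succ c$. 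The alliance-aware plurality scores are $45,40,39,34$, so $T_E=\emptyset$ and the rule returns $a$; yet $c$ is a \keyally (it wins $E_c$ with $40$ against $39$ and $39$), while $a$ is not ($b_1$ and $b_2$ each beat $a$ head-to-head $56{:}45$ and enter $T_{E_a}$). So your Case~II cannot be closed by the argument you sketch. Be aware also that the paper's own treatment of this sub-case only compares a \keyally's score with the \emph{opponents'} scores and says nothing about allies, so there is no ready-made lemma there to borrow for the missing step.
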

\begin{proof}
    For (1) the argumentation is the same as in case of \Cref{thm:iw-f}. (2) follows from the characterization of \keyallyplural presented in \Cref{thm:ka-majority} and \Cref{thm:ka-condorcet}. Additionally, for \keyallyshort-Plurality we note that if~$T=\emptyset$ and solitary winner consistency (possibly together with \splitting) requires electing~$c$, he or she needs to have the highest alliance-aware plurality score. It holds because after removing the allies of~$c$ from the election and merging the alliances of his or her opponents, the alliance-aware score of~$c$ is the same, and the opponents' score could only increase. (3) and (5) follow from the fact that the majority winner (Condorcet winner, for~$f=\text{Maximin}$) would always advance to the second round and win there. (4) follows from the fact that improving a position of a candidate~$c$ in a voter's ranking may only improve the alliance-aware~$f$-score of~$c$, does not affect the alliance-aware~$f$-score of his or her allies (hence, does not change the subset of candidates advancing to the second round), and increases the standard~$f$-score of~$c$ in the second round.
\end{proof}

To conclude this section, in \Cref{fig:overview} we present an overview of the axioms satisfied by our \independentwinnershort- and \keyallyshort- voting rules.

\begin{table}[t]
\centering
    \begin{tabular*}{\columnwidth}{ l | c | c | c | c }
    \toprule
     
     & \small{IW-Plur.} & \small{SW-Plur.} & \small{IW-Maximin} & \small{SW-Maximin} \\
     \midrule
     Ally-no-harm & \yes & \yes & \yes & \yes \\
     Res. to alliance-split. & \yes & \yes & \yes & \yes \\
     Ind. of similar allies & \yes & \yes & \yes & \yes \\
     Alliance monotonicity & \yes & \yes & \yes & \yes \\
    \midrule
     Monotonicity & \yes  & \yes  & \yes  & \yes \\
     Majority consistency & \yes  & \yes  & \yes  & \yes  \\
     Condorcet consistency  & \no & \no & \yes & \yes\\
     \midrule
     \independentwinnershort consistency   & \yes & \no & \yes & \no \\
     \keyallyshort consistency   & \no & \yes & \no & \yes \\
    \bottomrule
    \end{tabular*}
\caption{The axiomatic comparison of alliance-aware rules.}\label{fig:overview}
\end{table}

\section{Experimental Analysis}\label{sec:experiments}
In this section, we conduct two experiments, which we treat as sanity checks. In the first one we verify how often primary elections within alliances fail to select an optimal representative. In the second one, we compare alliance-aware rules with the standard ones from the utilitarian perspective. 

Since we lack the real-life election data with complete rankings and alliance affiliations, we focus on synthetically generated elections. Below we define two statistical cultures, one based on impartial culture, and the other one based on Euclidean model~\citep{EnelowHinich84,Merrill84}.
\begin{description}
    \item[Impartial Culture (IC).] Under the impartial culture model, each preference order is sampled uniformly at random. The alliances of candidates are also sampled uniformly at random.
    \item[Euclidean.] In a~${d}$-dimensional Euclidean model, each candidate, voter, and alliance is assigned an ideal point, 
    drawn independently from the uniform distribution on~${(0, 1)^{d}}$. Then, the alliance of a given candidate is the closest one (using~${\ell_2}$ distance) to to him or her. Finally, the ranking of each voter is obtained by ordering candidates according to the increasing~${\ell_2}$ distance from the voter's ideal point.
\end{description}

Note that we omit here several other well-known statistical cultures, e.g., the P\'olya-Eggenberger urn model~\citep{eggenberger1923statistik} or the Mallows model~\citep{mallowModel}. However, it seems unclear how to properly adapt them to the setting with alliances. For cultures different than impartial culture, choosing alliance membership uniformly at random does not seem appealing, since we expect that the candidates from the same alliance should be somehow ``closer'' to each other. On the other hand, for cultures different than the Euclidean one, there is no natural notion of ``distance'' between candidates. We therefore leave the study of other models for future research.

\subsection{Non-optimal Primary Winner}

\newcommand{\qq}[1]{%
    \pgfmathsetmacro\halfvalue{#1/2}
    \tikz[baseline=0.3ex]{%
        \fill[red!\halfvalue!white] (0,0ex) rectangle (6ex,2ex);
        \node[inner sep=0pt, anchor=center] at (3ex,1ex) {#1\%};%
    }%
}


 
In the following experiment we study standard Plurality and Maximin with the following modification: Only one candidate from each alliance takes part in the main election. We assume that representatives of all alliances are chosen via primaries. The primaries may be \emph{joint}, where we select primary winners using all the ballots, or \emph{disjoint}, where for each alliance~$A$ we select its representative using only the ballots of supporters of~$A$, i.e., voters who ranked a candidate from~$A$ on top.

Now we verify how frequently we witness choosing a non-optimal primary winner, i.e., a situation where there exists an alliance which is losing, but which could have won if another candidate was selected as the primary winner. 

In~\Cref{tab:spoilerity:primaries} we present the results. 
Each entry is an average over 1000 elections with 10 candidates, 101 voters and 2 alliances. (additional results with 8|10|12 candidates, 2|3 alliances, and including the Schulze's method are available in \Cref{app:experiments}). First of all, for disjoint primaries we observe a much higher probability of witnessing a non-optimal primary winner. As expected, the probability of witnessing a non-optimal primary winner is smaller under Maximin, than under Plurality---but still not negligible. The exception are 1D elections, but note that Maximin under joint primaries is Condorcet-consistent and in such elections (given the odd number of voters) the Condorcet winner always exists.

This simulation supports our claim that primaries are indeed not a good solution. 
Naturally, the exact probability of witnessing a non-optimal primary winner depends on the specific model. However, the very fact is that 
this value is significant for at least some models shows that 
this is an actual problem.


\subsection{Cost of Using Alliance-Aware Rules}
One might argue that using the alliance-aware voting rule, we can significantly decrease the social welfare. In the following experiment, we show that is not the case. 
Specifically, we compare the social welfare obtained by standard and alliance-aware rules. 

Given an election~$E=(C, V, \alliances)$, let~$c_w\in C$ be the winner of that election. We define the social welfare of candidate~$c_w$ as $\sum_{v \in V} (m - 1 - \mathrm{pos}_v(c_w))$, where~$\mathrm{pos}_v(c_w)$ denotes the position of candidate~$c_w$ in vote~$v$. In other words, the social welfare of a given candidate is equal to his or her Borda score.
The social welfare of an election~$E$ given voting rule~$f$, is defined as the social welfare of the candidate winning under rule~$f$.

In~\Cref{tab:spoilerity:primaries} we present the results. Fortunately, we see that the social welfare of alliance-aware variants of Maximin rule, provide the same social welfare as standard Maximin. For Plurality, the social welfare of alliance-aware variants is (in most cases) not only not smaller, but even larger than that of standard Plurality. Note that the Plurality with primaries is also providing larger social welfare than the standard version.

\newcommand{\ww}[1]{%
    \pgfmathsetmacro\smallvalue{(#1-450)/6}
    \tikz[baseline=0.3ex]{%
        \fill[blue!\smallvalue!white!] (0,0ex) rectangle (6ex,2ex);
        \node[inner sep=0pt, anchor=center] at (3ex,1ex) {#1};%
    }%
}



\begin{table}[t]
\centering
    \begin{tabular*}{\columnwidth}{c | c | cccc}
      \toprule
       & Rule &  IC & Euc 3D & Euc 2D & Euc 1D\\
      \midrule 
\multirow{4}{*}{\rotatebox{90}{spoilers}} & Plurality (Joint Primaries) & \qq{45.6} & \qq{10.9} & \qq{14.5} & \qq{0.3} \\ 
& Plurality (Disjoint Primaries) & \qq{49.3} & \qq{24.2} & \qq{37.5} & \qq{55.6} \\
\cmidrule(l{0pt}r{6pt}){2-6}
& Maximin (Joint Primaries) & \qq{35.3} & \qq{1.7} & \qq{1.6} & \qq{0.0} \\ 
& Maximin (Disjoint Primaries) & \qq{40.2} & \qq{11.3} & \qq{17.7} & \qq{67.1} \\
  \midrule 
\multirow{10}{*}{\rotatebox{90}{average social welfare}} & Plurality & \ww{480.8} & \ww{564.0} & \ww{523.8} & \ww{474.1} \\ 
& Plurality (Joint Primaries)  & \ww{490.3} & \ww{616.6} & \ww{595.6} & \ww{580.5} \\ 
& Plurality (Disjoint Primaries) & \ww{486.5} & \ww{588.9} & \ww{551.6} & \ww{525.8} \\ 
& IW-Plurality  & \ww{478.9} & \ww{568.1} & \ww{532.1} & \ww{493.6} \\ 
& SW-Plurality  & \ww{483.9} & \ww{626.1} & \ww{607.9} & \ww{580.8} \\ 
\cmidrule(l{0pt}r{6pt}){2-6}
& Maximin  & \ww{498.5} & \ww{643.8} & \ww{631.1} & \ww{581.6} \\ 
& Maximin (Joint Primaries) & \ww{497.0} & \ww{643.8} & \ww{631.1} & \ww{581.6} \\ 
& Maximin (Disjoint Primaries) & \ww{492.0} & \ww{628.7} & \ww{605.0} & \ww{515.0} \\ 
& IW-Maximin & \ww{497.9} & \ww{643.9} & \ww{631.0} & \ww{581.6} \\ 
& SW-Maximin & \ww{496.4} & \ww{643.8} & \ww{631.1} & \ww{581.6} \\ 
\bottomrule
\end{tabular*}
 \caption{\label{tab:spoilerity:primaries} Probability of non-optimal primary winner occurrence under Plurality and Maximin and the average social welfare of our rules compared to their classical analogues with and without primaries. Setup: 1000 elections (per culture) with 10 candidates, 101 voters, and 2 alliances.}
\end{table}

\section{Summary and Discussion}\label{sec:conclusion}

In this paper, we have proposed an idea of \emph{alliance-aware voting rules} as a proposal to eliminate the spoiler effect between candidates from the same alliance. We have introduced first axioms that are desirable in this model as well as four alliance-aware voting rules with good axiomatic properties. Our rules completely eliminate the need for alliances to run primary elections, which, as we have shown, often may return candidates that are not optimal if the goal is to win the whole election. We have argued that they provide similar social welfare to the voters as the standard rules. 

Let us complete our paper by discussing how our methods could be implemented in real-life elections and proposing one interesting extension of our model.

\subsection{Implementation Details}\label{sec:discussion}

The key factor that we find crucial about our rules is that they are easy to explain and understand. This is in contrast to many rules that are independent of clones (like Ranked Pairs proposed by \citet{tideman1987independence} or the Schulze's method). However, one could argue that for \independentwinnershort-Plurality and \keyallyshort-Plurality, there is still one issue which makes them more complicated (and therefore, less appealing) than standard Plurality: Namely, the complexity of voters' ballots. For standard Plurality it is enough for every voter to provide only his or her top choice, while our rules, as defined in \Cref{sec:rules}, seem to require full rankings.

However, it is actually not the case. For each voter~$v_i$, let us denote by~$T_i$ the largest set of candidates from the same alliance~$A$, such that~$v_i$ ranks no candidate outside of~$A$ higher than a member of~$T_i$. In words,~$T_i$ is the set of candidates who form the longest single-alliance prefix of~$v_i$'s ranking. Now we can see that for \independentwinnershort-Plurality, we in fact only need a voter~$v_i$ to provide his or her top choice together with set~$T_i$ (without even the need to order it). Indeed, the remaining information is not used by the rule neither in the first round, nor in the second one. This corresponds to a mixture of an approval and a plurality ballot.

For \keyallyshort-Plurality, the information about the~$v_i$'s preferences over candidates beyond~$T_i$ might be used in the second round. However, we can implement this method in a similar way as it is usually the case for the Plurality with runoff: The two rounds can be implemented as two separate votings. 
In the first round, each voter casts an approval ballot, approving the candidates from~$T_i$. Then candidates who gain approvals from more than~$50\%$ of voters, advance to the runoff (if there is no such candidate, the candidate with the greatest number of approvals wins). In the second round, the voters may vote using standard plurality ballots.

\subsection{Extension: Nested Alliances}\label{sec:nested-alliances}

In our work we have assumed for clarity that alliances do not overlap. However, our axioms and voting rules can be naturally extended to the setting where set~$\alliances$ forms a \emph{laminar family}, i.e., for each two alliances we have that either one of them is a subset of another, or they are disjoint. This may correspond to the situation in which we have e.g., political parties, their coalitions, and factions within them. Intuitively, now candidates $a$ and $b$ might be at the same time allies with respect to one alliance, and opponents with respect to some other (inner) one. The general idea behind our axioms remains the same, yet now their definitions need to include the fact that a candidate $c$ might be a member of multiple alliances. 

Our voting rules, instead of two rounds, now could possibly have at most~$k+1$ ones, where~$k$ is the maximal nesting depth of an alliance; intuitively, after each round we recount the alliance-aware scores of candidates, skipping the alliances that are supersets of the set of potential election winners. We present the details of our adapted definitions in \Cref{app:laminar}.

Among other possible extensions of our setting we could mention e.g., the model where alliances overlap arbitrarily or the multiwinner model. 
Besides, all the topics studied in the literature for classical single-winner voting rules (e.g., distortion \citep{procaccia2006distortion,Anshelevich2021DistortionIS}, resistance to manipulations, etc.) are interesting for alliance-aware rules as well. We leave these open problems for future research.

\section*{Acknowledgements}
This project has received funding from the European Research Council
(ERC) under the European Union’s Horizon 2020 research and innovation
programme (grant agreement No 101002854). Additionally, Grzegorz Pierczy\'nski was supported by Poland’s National Science Center grant no. 2022/45/N/ST6/00271.
    
\begin{center}
  \includegraphics[width=3cm]{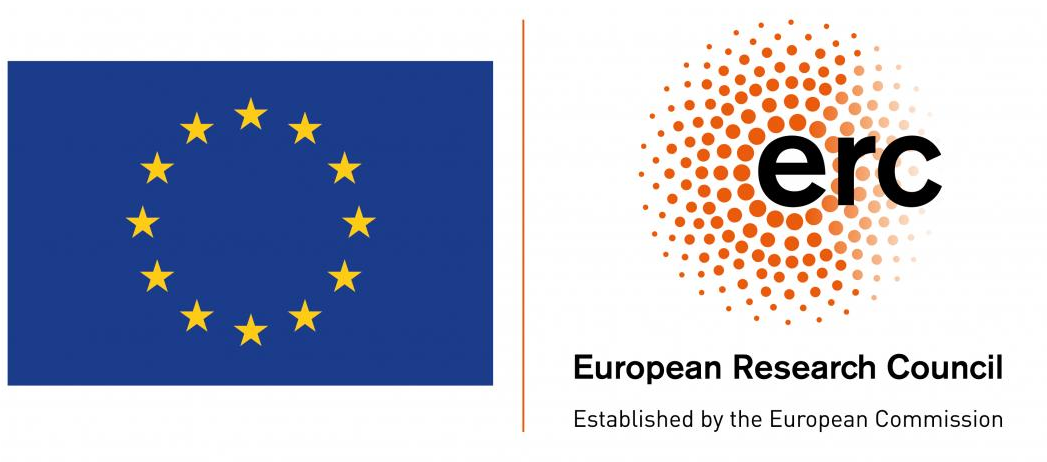}
\end{center}







\bibliographystyle{plainnat} 
\bibliography{biblio}

\begin{thebibliography}{29}
\providecommand{\natexlab}[1]{#1}
\providecommand{\url}[1]{\texttt{#1}}
\expandafter\ifx\csname urlstyle\endcsname\relax
  \providecommand{\doi}[1]{doi: #1}\else
  \providecommand{\doi}{doi: \begingroup \urlstyle{rm}\Url}\fi

\bibitem[Anshelevich et~al.(2021)Anshelevich, Filos-Ratsikas, Shah, and
  Voudouris]{Anshelevich2021DistortionIS}
Elliot Anshelevich, Aris Filos-Ratsikas, Nisarg Shah, and Alexandros~A.
  Voudouris.
\newblock Distortion in social choice problems: The first 15 years and beyond.
\newblock In \emph{International Joint Conference on Artificial Intelligence},
  2021.
\newblock URL \url{https://api.semanticscholar.org/CorpusID:232075911}.

\bibitem[Arrow(1950)]{arrow1950difficulty}
K.~J. Arrow.
\newblock A difficulty in the concept of social welfare.
\newblock \emph{Journal of political economy}, 58\penalty0 (4):\penalty0
  328--346, 1950.

\bibitem[Bordes and Tideman(1991)]{bordes1991independence}
G.~Bordes and T.~N. Tideman.
\newblock Independence of irrelevant alternatives in the theory of voting.
\newblock \emph{Theory and Decision}, 30:\penalty0 163--186, 1991.

\bibitem[B{\"o}rgers(2010)]{borgers2010mathematics}
C.~B{\"o}rgers.
\newblock \emph{Mathematics of social choice: voting, compensation, and
  division}.
\newblock SIAM, 2010.

\bibitem[Chevaleyre et~al.(2010)Chevaleyre, Lang, Maudet, and
  Monnot]{chevaleyre2010possible}
Y.~Chevaleyre, J.~Lang, N.~Maudet, and J.~Monnot.
\newblock Possible winners when new candidates are added: The case of scoring
  rules.
\newblock In \emph{Proceedings of the AAAI Conference on Artificial
  Intelligence}, volume~24, pages 762--767, 2010.

\bibitem[Dutta et~al.(2001)Dutta, Jackson, and Le~Breton]{dutta2001strategic}
B.~Dutta, M.~O. Jackson, and M.~Le~Breton.
\newblock Strategic candidacy and voting procedures.
\newblock \emph{Econometrica}, 69\penalty0 (4):\penalty0 1013--1037, 2001.

\bibitem[Eggenberger and P{\'o}lya(1923)]{eggenberger1923statistik}
F.~Eggenberger and G.~P{\'o}lya.
\newblock {\"U}ber die statistik verketteter vorg{\"a}nge.
\newblock \emph{ZAMM-Journal of Applied Mathematics and Mechanics/Zeitschrift
  f{\"u}r Angewandte Mathematik und Mechanik}, 3\penalty0 (4):\penalty0
  279--289, 1923.

\bibitem[Ehlers and Weymark(2003)]{ehlers2003candidate}
L.~Ehlers and J.~A. Weymark.
\newblock Candidate stability and nonbinary social choice.
\newblock \emph{Economic Theory}, 22:\penalty0 233--243, 2003.

\bibitem[Enelow and Hinich(1984)]{EnelowHinich84}
J.~M. Enelow and M.~J. Hinich.
\newblock \emph{The {Spatial} {Theory} of {Voting}: {An} {Introduction}}.
\newblock Cambridge University Press, Cambridge, UK, 1984.

\bibitem[Erd{\'e}lyi et~al.(2021)Erd{\'e}lyi, Neveling, Reger, Rothe, Yang, and
  Zorn]{erdelyi2021towards}
G.~Erd{\'e}lyi, M.~Neveling, C.~Reger, J.~Rothe, Y.~Yang, and R.~Zorn.
\newblock Towards completing the puzzle: complexity of control by replacing,
  adding, and deleting candidates or voters.
\newblock \emph{Autonomous Agents and Multi-Agent Systems}, 35:\penalty0 1--48,
  2021.

\bibitem[Faliszewski et~al.(2016)Faliszewski, Gourv{\`e}s, Lang, Lesca, and
  Monnot]{faliszewski2016hard}
P.~Faliszewski, L.~Gourv{\`e}s, J.~Lang, J.~Lesca, and J.~Monnot.
\newblock How hard is it for a party to nominate an election winner?
\newblock In \emph{Twenty-Fifth International Joint Conference on Artificial
  Intelligence, IJCAI 2016}, pages 257--263, 2016.

\bibitem[Harrenstein et~al.(2021)Harrenstein, Lisowski, Sridharan, and
  Turrini]{harrenstein2021hotelling}
Paul Harrenstein, Grzegorz Lisowski, Ramanujan Sridharan, and Paolo Turrini.
\newblock A hotelling-downs framework for party nominees.
\newblock In \emph{Proceedings of the 20th International Conference on
  Autonomous Agents and MultiAgent Systems}, pages 593--601, 2021.

\bibitem[Lang et~al.(2013)Lang, Maudet, and Polukarov]{lang2013new}
J.~Lang, N.~Maudet, and M.~Polukarov.
\newblock New results on equilibria in strategic candidacy.
\newblock In \emph{Algorithmic Game Theory: 6th International Symposium, SAGT
  2013, Aachen, Germany, October 21-23, 2013. Proceedings 6}, pages 13--25.
  Springer, 2013.

\bibitem[Liu et~al.(2009)Liu, Feng, Zhu, and Luan]{liu2009parameterized}
H.~Liu, H.~Feng, D.~Zhu, and J.~Luan.
\newblock Parameterized computational complexity of control problems in voting
  systems.
\newblock \emph{Theoretical Computer Science}, 410\penalty0 (27-29):\penalty0
  2746--2753, 2009.

\bibitem[Mallows(1957)]{mallowModel}
C.~L. Mallows.
\newblock Non-null ranking models. {I}.
\newblock \emph{Biometrika}, 44\penalty0 (1-2):\penalty0 114--130, June 1957.

\bibitem[Maskin(2020)]{Maskin20}
E.~Maskin.
\newblock A modified version of arrow's {IIA} condition.
\newblock \emph{Soc. Choice Welf.}, 54\penalty0 (2-3):\penalty0 203--209, 2020.

\bibitem[McCune and Wilson(2023)]{mccune2023ranked}
D.~McCune and J.~Wilson.
\newblock Ranked-choice voting and the spoiler effect.
\newblock \emph{Public Choice}, pages 1--32, 2023.

\bibitem[Meir et~al.(2008)Meir, Procaccia, Rosenschein, and
  Zohar]{meir2008complexity}
R.~Meir, A.~D. Procaccia, J.~S. Rosenschein, and A.~Zohar.
\newblock Complexity of strategic behavior in multi-winner elections.
\newblock \emph{Journal of Artificial Intelligence Research}, 33:\penalty0
  149--178, 2008.

\bibitem[Merrill(1984)]{Merrill84}
S.~Merrill.
\newblock A {Comparison} of {Efficiency} of {Multicandidate} {Electoral}
  {Systems}.
\newblock \emph{American Journal of Political Science}, 28:\penalty0 23--48,
  1984.

\bibitem[Neveling et~al.(2020)Neveling, Rothe, and
  Zorn]{neveling2020complexity}
M.~Neveling, J.~Rothe, and R.~Zorn.
\newblock The complexity of controlling condorcet, fallback, and k-veto
  elections by replacing candidates or voters.
\newblock In \emph{Computer Science--Theory and Applications: 15th
  International Computer Science Symposium in Russia, CSR 2020, Yekaterinburg,
  Russia, June 29--July 3, 2020, Proceedings 15}, pages 314--327. Springer,
  2020.

\bibitem[Osborne(1976)]{Osborne76}
D.~K. Osborne.
\newblock Irrelevant alternatives and social welfare.
\newblock \emph{Econometrica}, 44\penalty0 (5):\penalty0 1001--1015, 1976.

\bibitem[Polukarov et~al.(2015)Polukarov, Obraztsova, Rabinovich, Kruglyi, and
  Jennings]{polukarov2015convergence}
M.~Polukarov, S.~Obraztsova, Z.~Rabinovich, A.~Kruglyi, and N.~R. Jennings.
\newblock Convergence to equilibria in strategic candidacy.
\newblock In \emph{Proceedings of the 24th International Conference on
  Artificial Intelligence}, pages 624--630, 2015.

\bibitem[Procaccia and Rosenschein(2006)]{procaccia2006distortion}
Ariel~D Procaccia and Jeffrey~S Rosenschein.
\newblock The distortion of cardinal preferences in voting.
\newblock In \emph{International Workshop on Cooperative Information Agents},
  pages 317--331. Springer, 2006.

\bibitem[Ray(1973)]{ray1973independence}
P.~Ray.
\newblock Independence of irrelevant alternatives.
\newblock \emph{Econometrica: Journal of the Econometric Society}, pages
  987--991, 1973.

\bibitem[Rodr{\'\i}guez-{\'A}lvarez(2006{\natexlab{a}})]{rodriguez2006candidate}
C.~Rodr{\'\i}guez-{\'A}lvarez.
\newblock Candidate stability and probabilistic voting procedures.
\newblock \emph{Economic Theory}, 27:\penalty0 657--677, 2006{\natexlab{a}}.

\bibitem[Rodr{\'\i}guez-{\'A}lvarez(2006{\natexlab{b}})]{rodriguez2006correspondence}
C.~Rodr{\'\i}guez-{\'A}lvarez.
\newblock Candidate stability and voting correspondences.
\newblock \emph{Social Choice and Welfare}, 27\penalty0 (3):\penalty0 545--570,
  2006{\natexlab{b}}.

\bibitem[Schulze(2011)]{schulze2011new}
M.~Schulze.
\newblock A new monotonic, clone-independent, reversal symmetric, and
  condorcet-consistent single-winner election method.
\newblock \emph{Social choice and Welfare}, 36:\penalty0 267--303, 2011.

\bibitem[Sen(1970)]{Sen70}
A.~Sen.
\newblock \emph{Collective Choice and Social Welfare}.
\newblock Harvard University Press, 1970.

\bibitem[Tideman(1987)]{tideman1987independence}
T.~N. Tideman.
\newblock Independence of clones as a criterion for voting rules.
\newblock \emph{Social Choice and Welfare}, 4:\penalty0 185--206, 1987.

\end{thebibliography}

\newpage
\appendix 
\section{The Analysis of the Schulze's Method}\label{app:schulze}

The standard voting rule proposed by \citet{schulze2011new} can be defined as follows:
\begin{definition}[The Schulze's rule]
Let the \emph{strength} of the sequence of candidates $(c_1, c_2, \ldots, c_k)$ be defined as:
\begin{equation*}
    \min\big(\pref(c_1, c_2), \pref(c_2, c_3), \ldots, \pref(c_{k-1}, c_k)\big).
\end{equation*}
Any sequence of candidates where $a$ is the first element and $b$ is the last element would be called a \emph{path from $a$ to $b$}.

    Now, for each candidate the Schulze's rule computes the \emph{Schulze score}, defined as follows: each candidate $a$ scores one point for each candidate $b$ such that the strongest path from $a$ from $b$ is at least as strong as the strongest path from $b$ to $a$. The candidate maximizing the Schulze score is elected as the winner.
\end{definition}

If the strongest path from $a$ from $b$ is (strictly) stronger than the strongest path from $b$ to $a$, we say that $a$ \emph{dominates} $b$.

Note that if candidate $a$ dominates candidate $b$ and candidate $b$ dominates candidate $c$, then also candidate $a$ dominates candidate $c$. 
This fact implies that there always exist a candidate who scores $m-1$ points against all the other candidates.

The Schulze's rule has recently gained popularity since it has good axiomatic properties. Not only it is a Condorcet-consistent rule, but it is also one of the very few methods satisfying both monotonicity and independence of clones (the only other well-established rule with similar properties is Ranked Pairs proposed by \citet{tideman1987independence}). On the other hand, the alliance-aware version of this rule would probably not be a feasible proposal for real-life political elections---even in no-ally elections its definition is quite complex and we may expect that its alliance-aware variant would be even worse. Besides, we believe that if a rule satisfies \allynoharm and \clonedallynohelp then satisfying independence of clones is not as appealing as it is in the model without alliances. Therefore, we feel that \independentwinnershort-Maximin and \keyallyshort-Maximin are more reasonable Condorcet-consistent proposals for this model than extensions of Schulze.

Nevertheless, let us describe briefly how the idea of the Schulze's rule can be adapted to our model. Similarly as in case of Plurality and Maximin, we start by presenting the alliance-aware extension of the Schulze score.

\begin{definition}[Alliance-aware Schulze score]
    Each candidate $a$ scores one point for each his or her ally and for each opponent $b$ such that the strongest path from $a$ from $b$ containing allies of $a$ and not containing allies of $b$ is at least as strong as the strongest path from $b$ to $a$ containing allies of $b$ and not containing allies of $a$. If it is stronger, we say that $a$ dominates $b$.
\end{definition}

Intuitively, now the presence of allies can only ``help'' candidate $a$ (possibly increasing the strength of the strongest paths from him or her to some opponents) and not ``harm'' (since when calculating paths from other candidate $b$ to $a$, we cannot take into account allies of $a$).

Similarly as in case of standard Schulze score, for each election there exists a candidate who scores $m-1$ points. Otherwise, the domination relation would be intransitive: there would be three opponents, $a$, $b$ and $c$ such that $a$ dominates $b$, $b$ dominates $c$ and $c$ dominates $a$. Such a situation is impossible, which can be shown by the very same proof to the one provided by \citet{schulze2011new}: Suppose that the strength of the strongest path from $a$ to $b$ is $S_1$ and the strength of the strongest path from $b$ to $c$ is $S_2$. Then, in particular, there exists a path from $a$ to $c$ with strength $\min(S_1, S_2)$. Suppose now that there exists a path from $c$ to $a$ that has greater strength than $\min(S_1, S_2)$, say $S_3$. But if $S_1 \leq S_2$, then there is a path from $b$ to $a$ with strength $\min(S_2, S_3) \geq S_1$, a contradiction. On the other hand, if $S_1 > S_2$, then there exists a path from $c$ to $b$ with strength $\min(S_3, S_1) > S_2$, a contradiction. Note that our restriction that paths from each candidate cannot contain his or her allies, has no impact on this reasoning.

Now the definition of \independentwinnershort-Schulze is the same as \Cref{def:iw-rules}, after allowing $f\in\{\text{Plurality}, \text{Maximin}, \text{Schulze}\}$. 

\begin{definition}[\independentwinnershort-Schulze]
    The \independentwinnershort-Schulze's rule has two rounds: The first one chooses the winning alliance and the second one chooses the winning candidate. In the first round, we compute alliance-aware Schulze scores for each candidate and we choose the alliance~${A}$ whose member has the highest score. In the second round, only candidates from~${A}$ can win, yet the other ones are not removed from the election. The winner is the candidate from~${A}$ with the highest standard Schulze score.
\end{definition}

We can show that \independentwinnershort-Schulze has all the axiomatic properties of Maximin; additionally, it is cloneproof.

\begin{theorem}
    \independentwinnershort-Schulze is (1) basic, (2) monotone, (3) \independentxwinner-consistent, (4) Condorcet-consistent, (5) cloneproof.
\end{theorem}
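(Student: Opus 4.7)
The plan is to follow the blueprint of \Cref{thm:iw-f}, with the alliance-aware Schulze score playing the role of the alliance-aware $f$-score. The first step is to establish a Schulze-flavored analogue of \Cref{obs:fscore}, namely that the alliance-aware Schulze score behaves monotonically under the operations relevant to our axioms: removing a candidate from a losing alliance cannot lower the score of any remaining candidate; splitting a losing alliance can only weakly decrease the alliance-aware scores of its former members while leaving scores in other alliances untouched; and for two similar allies $a_1, a_2$ and every opponent $b$, the strongest path from $b$ to $a_1$ permitted by the alliance-aware definition has the same strength as the corresponding path to $a_2$ (because similarity gives $\pref(a_1, x) = \pref(a_2, x)$ for every opponent $x$, and the sets of candidates admissible in each path are identical).

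With these lemmas in place, (1) is immediate: \allynoharm and \splitting follow because the winning-alliance decision in the first round cannot flip away from the current winner's alliance when that alliance is already losing; \clonedallynohelp follows from the path-equivalence property above combined with \allynoharm; and alliance monotonicity follows because improving the position of candidate $c$ in a ballot can only strengthen paths starting at $c$ (or at allies of $c$ that go through $c$) and only weaken paths ending at $c$. Property (2) is essentially the same argument, replacing ``alliance'' by ``singleton''. For (3), if $c$ is the \independentwinner, then in the election where $A(c)$ is split into $\{c\}$ and $A(c)\setminus\{c\}$, $c$ is the standard Schulze winner and hence has the highest standard Schulze score; re-attaching the allies can only weakly strengthen paths from $c$, so $c$ remains the alliance-aware Schulze maximizer and $A(c)$ wins round one, after which $c$ retains the highest standard Schulze score within $A(c)$. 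For (4), a Condorcet winner $c$ has $\pref(c, d) > \nicefrac{n}{2}$ to every $d$, yielding a direct permitted path of strength exceeding $\nicefrac{n}{2}$ from $c$ to each $d$ while every path back to $c$ has some edge of weight below $\nicefrac{n}{2}$; thus $c$ dominates every opponent in the alliance-aware Schulze sense, $A(c)$ wins round one, and Condorcet consistency of standard Schulze selects $c$ in round two.

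The main obstacle is (5). Unlike \clonedallynohelp, cloneproofness must handle clone sets $T$ that need not respect the alliance partition: $T$ may lie inside one alliance, span several, or coincide with an entire alliance. The plan is to exploit the defining property of clones---contiguity in every ranking---to show that for every non-clone $d$ and every pair $c, c' \in T$ we have $\pref(c, d) = \pref(c', d)$, and moreover that any alliance-aware Schulze path between two non-clones that visits $T$ can be rerouted through a single designated clone of $T$ without changing its strength. Using this, removing one clone from $T$ preserves all alliance-aware scores of non-clones and preserves the relative ordering among the remaining clones. If the original winner lies outside $T$, both rounds reproduce their previous outcome. If the original winner lies in $T$, the first round still chooses an alliance intersecting $T$, and the second round restricts to standard Schulze on the winning alliance, which is already known to be cloneproof. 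The delicate subcase is when $T$ spans multiple alliances: here I would argue that the alliance-aware score contributed by the clones depends on pref-values (uniform across $T$ by the contiguity argument) and on which allies may appear in the admissible paths (which changes only for the removed clone's own alliance), and that this suffices to rule out that deleting the clone promotes any clone-free alliance over the previously winning one.
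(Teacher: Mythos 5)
Your proposal follows essentially the same route as the paper's own proof: a Schulze analogue of \Cref{obs:fscore} for the basic axioms, the ``winning alone implies winning with allies'' argument for \independentwinner consistency, domination by the Condorcet winner for (4), and for (5) the rerouting argument that clones have identical head-to-head records against outsiders, so any strongest path can be replaced by an equally strong one using at most one clone. The ``delicate subcase'' you flag in (5)---clone sets spanning several alliances and interacting with the admissibility restrictions on alliance-aware paths---is precisely the point that the paper's one-sentence argument also leaves implicit, so your plan does not diverge from (nor fall short of) the published proof.
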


\begin{proof}
Regarding (1): It is clear that \independentwinnershort-Schulze satisfies \splitting and \allynoharm (since the removal of a candidate $c$ does not change the strongest paths from the opponents of $c$ to any ally of $c$). \Clonedallynohelp follows from the fact that similar candidates have Besides, improving a position of a candidate $c$ can only increase his or her alliance-aware Schulze score as well as the score of the allies of $c$, while such an operation does not affect the strongest paths from the opponents of $c$ to the allies of $c$. Together with the observation that if a candidate is winning, he or she scores $m-1$ points, we obtain that if $c$ or an ally of $c$ was winning before the improvement of $c$'s position, he or she still wins after this operation---hence, both alliance monotonicity and standard monotonicity is satisfied. (3) follows from the fact that if a candidate is a winner without the help from his or her allies, he or she would also gain the highest score both in the first and the second round. (4) follows from the fact that a Condorcet winner dominates every other candidate, hence he or she would score $m-1$ points in the first round (possibly with some of his or her allies) and uniquely $m-1$ points in the second round. (5) follows from the fact that every set of clones $T$ has identical results of head-to-head comparisons with candidates outside of $T$. Hence, for each $a, b\in C$, if more than one candidate from $T$ appears on the strongest path $P$ from $a$ to $b$ then there also exists an equally strong path $P'$ from $a$ to $b$ including at most one candidate from $T$. Hence, adding or removing clones do not affect the values of the strongest paths.
\end{proof}

The definition of \keyxally-consistent version of Schulze is also similar to \Cref{def:sw-rules}, yet with one change, caused by the different nature of the Schulze score compared to maximin or plurality scores. Namely, the condition for advancing to the second round is no longer having more than $\nicefrac{n}{2}$ points: instead, we apply \Cref{thm:ka-condorcet} directly and say that the candidates who win head-to-head against every opponent advance to the second round.

\begin{definition}[\keyallyshort-Schulze]
    The \keyallyshort-Schulze's rule has two rounds. In the first round, we check if the set of candidates who win head-to-head against every opponent $T$ is nonempty. Then in the second round, we eliminate all the candidates from~${C\setminus T}$ and elect the candidate from~${T}$ with the highest standard Schulze score. If~$T=\emptyset$, the rule terminates after the first round, returning the candidate with the highest alliance-aware Schulze score.
\end{definition}

The axiomatic properties of \keyallyshort-Schulze are the same as the properties of \independentwinnershort-Schulze, yet instead of \independentwinner consistency it satisfies \keyally consistency.

\begin{theorem}
    \keyallyshort-Schulze is (1) basic, (2) monotone, (3) \keyxally-consistent, (4) Condorcet-consistent, (5) cloneproof.
\end{theorem}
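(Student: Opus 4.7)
The plan is to mirror the proof of the preceding theorem for \independentwinnershort-Schulze, adapted to the two-round selection of \keyallyshort-Schulze: in the first round we collect the set $T$ of candidates beating every opponent head-to-head; if $T \neq \emptyset$, we elect the $T$-member with maximum standard Schulze score, otherwise we elect the candidate with maximum alliance-aware Schulze score. The reusable technical ingredients are: (a) removing a candidate $c$ does not change any strongest path from an opponent of $c$ to an ally of $c$ (such a path is already forbidden to use allies of the target, i.e.\ allies of $c$), while it can only weaken paths in the reverse direction; (b) splitting an alliance $A$ tightens path restrictions for members of $A$ and loosens them for outsiders; (c) two similar candidates have identical head-to-head counts against every common opponent; and (d) the head-to-head tally between any two candidates is invariant under all these operations whenever neither candidate is the one being removed.

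For the four basic axioms in (1), observe first that membership of any candidate $a \notin A(c)$ in the first-round set $T$ depends only on head-to-head counts that are untouched by the relevant operation. Hence for \allynoharm, when the winner $w \notin A(c)$ and $T_E \neq \emptyset$, $w$ remains in $T_{E'}$; and when $T_E = \emptyset$, the alliance-aware Schulze argument already used for \independentwinnershort-Schulze (allies of $c$ can only lose points while non-allies can only gain) keeps the alliance-aware maximum outside $A(c)$. \Splitting is handled by the analogous bookkeeping, and \clonedallynohelp follows because removing a similar ally leaves all head-to-heads involving opponents, and thus all the relevant path strengths, invariant. Alliance monotonicity and standard monotonicity (2) both follow from the monotonicity of the standard Schulze score: improving $c$'s position can only help $c$ (or an ally of $c$) enter or stay in $T$ and can only improve the second-round Schulze comparisons in their favour.

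For (4), a Condorcet winner beats every opponent, so it lies in $T$ and attains the maximum possible standard Schulze score $m-1$, winning the second round. Combining (1) with (4), \Cref{thm:ka-condorcet} yields that the set of solitary winners equals $T$, from which (3) follows immediately since the rule always elects a member of $T$ whenever it is nonempty. For (5) cloneproofness, I would reuse the classical Schulze independence-of-clones argument, noting that clones share identical head-to-heads against non-clones, so $T$-membership of non-clones and their standard Schulze scores are preserved after removing a clone, while the surviving clones play the role of their collapsed block as in the standard proof. The main obstacle will be the second-round case of \allynoharm and \splitting when $T_E$ contains both the outside winner $w$ and a member of $A(c)$: one must argue that the allowed modification cannot cause a member of $A(c)$ to overtake $w$ in standard Schulze score inside $T_{E'}$, which will require leveraging the cloneproofness of standard Schulze applied to the contextual changes affecting $A(c)$.
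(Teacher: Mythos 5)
Your proposal follows essentially the same route as the paper, which for (1), (2) and (5) simply reuses the IW-Schulze arguments, derives (3) from \Cref{thm:ka-condorcet} (together with basicness and Condorcet consistency), and for (4) notes that the Condorcet winner advances to the runoff and wins it. Two small points. The ``main obstacle'' you flag at the end is vacuous: two opponents cannot both win head-to-head against each other, so the first-round set $T$ is always contained in a single alliance. Hence if the winner $w\notin A(c)$ lies in $T_E$, then $T_E\cap A(c)=\emptyset$; moreover, removing $c$ does not change the opponent set or any head-to-head count of the remaining members of $A(c)$ (and splitting a losing alliance only enlarges its members' opponent sets), so no member of $A(c)$ can newly enter $T$ --- the runoff never has to arbitrate between $w$ and a member of $A(c)$, and no appeal to cloneproofness of standard Schulze is needed there. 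Second, in the runoff the candidates outside $T$ are eliminated, so the Condorcet winner's standard Schulze score there is $|T|-1$ (as the paper states), not $m-1$; the conclusion is unaffected.
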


\begin{proof}
The proofs of (1), (2) and (5) are the same as before. (3) follows directly from \Cref{thm:ka-condorcet}. (4) follows from the fact that the Condorcet winner would advance to the second round and score $|T|-1$ points in it via direct victories.
\end{proof}

We leave the more detailed analysis of \independentwinnershort-Schulze and \keyallyshort-Schulze for future research, yet we conjecture that other axiomatic properties of classical Schulze (e.g., reversal symmetry or Smith consistency) are also preserved.


\section{Discussion over the Definition of a \Keyally}\label{app:stronger-key-ally}

In this section, we consider an alternative definition of \keyally consistency, briefly mentioned in \Cref{sec:individual-axioms}. 

\begin{definition}[An alternative version of \keyally consistency]\label{def:key-ally2}
Given an election~${E}$ and voting rule~${f}$, we say that candidate~${c}$ is a \emph{\keyally} under~${f}$, if~${c=f(E')}$, where~${E'}$ is obtained from~${E}$ by removing from the election all the candidates from~${A(c)}$, except for~${c}$.

We say that a voting rule~${f}$ is \emph{\keyxally}-consistent (or, \emph{\keyallyshort-consistent}, for short) if for each election~${E}$ such that the set of \keyallyplural~${W}$ is nonempty, we have that~${f(E)\in W}$.
\end{definition}

\Cref{def:key-ally2} is the same as \Cref{def:key-ally}, without the restriction that an election $E$ should be two-alliance. Now, \Cref{thm:ka-majority} still holds, yet for \Cref{thm:ka-condorcet}, precisely characterizing the guarantees provided by this axiom for Condorcet-consistent rules, it is not the case. We can see this in the following example:

\begin{example}\label{ex:ka-bad-maximin}
    Consider an election $E$ with four candidates $a_1, a_2, b, c$, where only $a_1$ and $a_2$ are allies. The voters' preferences are the following:
\begin{center}
    \begin{minipage}{0.59\linewidth}
    \begin{alignat*}{6}
    5 \text{ votes}  \colon \quad &\fixedwidth{a_2} &&\succ \fixedwidth{a_1} &&\succ \fixedwidth{b} &&\succ \fixedwidth{c}\\
    5 \text{ votes} \colon \quad & \fixedwidth{c} &&\succ \fixedwidth{a_1} &&\succ \fixedwidth{b} &&\succ \fixedwidth{a_2}\\
    3 \text{ votes} \colon \quad &\fixedwidth{b} &&\succ \fixedwidth{a_2} &&\succ \fixedwidth{c}  &&\succ \fixedwidth{a_1} \\
    1 \text{ vote} \colon \quad &\fixedwidth{a_2} &&\succ \fixedwidth{b} &&\succ \fixedwidth{c}  &&\succ \fixedwidth{a_1}\\
    1 \text{ vote} \colon \quad &\fixedwidth{a_2} &&\succ \fixedwidth{b} &&\succ \fixedwidth{a_1}  &&\succ \fixedwidth{c}
    \end{alignat*}

    \end{minipage}
    \begin{minipage}{0.39\linewidth}
        \begin{tikzpicture}[scale=0.8]
    \node (a1) at (0, 2) {$a_1$};
    \node (b) at (0, -2) {$b$};
    \node (c) at (1.5, 0) {$c$};
    \node (a2) at (4, 0) {$a_2$};

    \draw [-{Stealth[scale=1]}] (a1) -- (b) node [pos=.5, above, sloped] {10:5};
    \draw [-{Stealth[scale=1]}] (b) -- (c) node [pos=.5, above, sloped] {10:5};
    \draw [-{Stealth[scale=1]}] (c) -- (a1) node [pos=.5, above, sloped] {9:6};
    \draw [-{Stealth[scale=1]}] (a2) -- (a1) node [pos=.5, above, sloped] {10:5};
    \draw [-{Stealth[scale=1]}] (a2) -- (c) node [pos=.5, above, sloped] {10:5};
    \draw [-{Stealth[scale=1]}] (b) -- (a2) node [pos=.5, above, sloped] {8:7};
\end{tikzpicture}
    \end{minipage}
\end{center}
Suppose that we use a basic rule $f$ extending Maximin. Then, after removing $a_2$ from the election, $a_1$ wins, and therefore is a \keyally. On the other hand, after removing $a_1$ from the election, $a_2$ loses ($b$ is the Condorcet winner), hence $a_2$ is not a \keyally. Hence, \Cref{def:key-ally2}, together with \allynoharm, requires that $a_1$ is elected. \Cref{def:key-ally} does not put any constraints which candidate out of $\{a_1, a_2\}$ should be elected and our \keyallyshort-Maximin rule elects $a_2$.
\end{example}

Here, we can see that the hard requirement that $a_1$ should be elected is probably not desirable. Both $a_1$ and $a_2$ win head-to-head against one of the candidates $\{b, c\}$ and lose against the other one. The only reason why $a_1$ wins after removing $a_2$ is the fortunate vote-splitting between $b$ and $c$. If the first group of $5$ voters changed their preference to $c \succ b$, the situation would be the opposite: \Cref{def:key-ally2} would require to elect $a_2$ and forbid to elect $a_1$. Hence, \Cref{ex:ka-bad-maximin} shows that after removing the two-alliance constraint, solitary winner consistency becomes less appealing.

Nevertheless, one could still be interested whether \Cref{def:key-ally2} is satisfiable for basic alliance-aware rules. While we do not know the answer to this question in general, we prove that it is not possible for any basic rule that extends Plurality, Maximin, or the Schulze's rule (see the definition in \Cref{app:schulze}). Since we have already excluded the other well-established voting rules and, as we have shown above, we are not convinced that the strengthening provided by \Cref{def:key-ally2} over \Cref{def:key-ally} is actually desirable, we treat this result as an argument to reject \Cref{def:key-ally2}.

\begin{proposition}
    There exists no basic alliance-aware voting rule that extends Plurality and satisfies \Cref{def:key-ally2}.
\end{proposition}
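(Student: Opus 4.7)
The plan is a proof by contradiction. Assume that a basic alliance-aware rule $f$ extending Plurality also satisfies Def~\ref{def:key-ally2}. Since Def~\ref{def:key-ally2} coincides with Def~\ref{def:key-ally} on two-alliance elections and \keyallyshort-Plurality (Theorem~\ref{thm:sw-f}) shows the latter is realisable, any counterexample must involve at least three alliances. I would construct a concrete election $E$ on which the combined constraints from SW-consistency and the basic axioms have no common solution.

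My construction would use alliances $\{a_1, a_2\}$, $\{b\}$, and $\{c_1, c_2\}$ with $c_1, c_2$ arranged to be similar. The vote profile would be designed to create two pincer constraints on $f(E)$. The first comes from SW-consistency: by placing a large enough block of $a_2 \succ a_1 \succ \cdots$ voters, the sub-election $E \setminus \{a_2\}$ (the one probing whether $a_1$ is a solitary winner) becomes, after peeling the similar pair $c_1, c_2$ via \clonedallynohelp, a no-ally Plurality election won by $a_1$. A routine check against the probing sub-elections of the remaining candidates then makes $a_1$ the unique solitary winner, forcing $f(E) = a_1$.

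The second constraint comes from the basic axioms: \clonedallynohelp applied to $c_1, c_2$ inside $E$ itself gives $f(E) \in \{c_1, c_2\} \cup \{f(E \setminus \{c_2\})\}$. In the sub-election $E \setminus \{c_2\}$ the candidate $a_2$ is still present, so $a_1$ receives no vote-transfer boost; I would arrange the votes so that extending Plurality, \splitting, and \allynoharm collectively force $f(E \setminus \{c_2\})$ to some candidate $y \notin \{a_1, c_1, c_2\}$ (for example $a_2$ or $b$, via the Plurality winner of the fully split version). Pulling this back through \clonedallynohelp forces $f(E) \in \{c_1, c_2, y\}$, which excludes $a_1$ and contradicts the first pincer.

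The main obstacle is precisely the vote engineering. The sub-election $E \setminus \{a_2, c_2\}$ that certifies $a_1$'s SW status differs from $E \setminus \{c_2\}$ only by the presence of $a_2$, so the required asymmetry---$a_1$ wins the former but a different candidate wins the latter---must be generated entirely by the $a_2 \succ a_1$ first-preference transfer. A more subtle worry is that $f(E \setminus \{c_2\})$ might itself collapse back to $a_1$ via its own recursive SW-consistency chain (since the sub-election defining $a_1$'s SW status in $E \setminus \{c_2\}$ is literally the same no-ally reduction as in $E$); ruling this out requires making $a_2$ a solitary winner of $E \setminus \{c_2\}$ too, so that SW-consistency inside $E \setminus \{c_2\}$ does not single out $a_1$, and then using \splitting or \allynoharm to steer the rule toward $y \neq a_1$ there. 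This recursive verification is where the fine combinatorial tuning is concentrated.
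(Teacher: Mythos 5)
Your strategy---a single three-alliance election on which \keyallyshort-consistency and the basic axioms impose incompatible constraints---has a gap in the second pincer that I do not think can be repaired. The only way the available axioms can pin down $f(E\setminus\{a_2\})=a_1$ (the first pincer) is by reducing, via \clonedallynohelp and/or \splitting, to the no-ally election $E\setminus\{a_2,c_2\}$, where ``extends Plurality'' bites; and since $c_1,c_2$ are similar, deleting $c_2$ can only transfer first-place votes to $c_1$, so the Plurality winner of $E\setminus\{a_2,c_2\}$ must be $a_1$ for this derivation to go through. But $E\setminus\{a_2,c_2\}$ is exactly the solitary-winner probe for $a_1$ inside $E\setminus\{c_2\}$, so $a_1$ is unavoidably a \keyally of $E\setminus\{c_2\}$, and \keyallyshort-consistency there \emph{permits} $f(E\setminus\{c_2\})=a_1$. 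Your proposed repair---making $a_2$ a \keyally of $E\setminus\{c_2\}$ as well---only enlarges the permitted set; it never excludes $a_1$. And none of \splitting, \allynoharm, \clonedallynohelp, or alliance monotonicity can discriminate between two members of the \emph{winning} alliance: they all constrain either which alliance wins or what happens to losing alliances. Hence nothing forces $f(E\setminus\{c_2\})=y\neq a_1$, the constraint system on your single election stays satisfiable, and no contradiction materializes.

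The paper's proof circumvents precisely this obstruction by working with two alliance structures over the \emph{same} vote profile of five candidates $a_1,a_2,b,c,d$: in $E_1$ the opponents are grouped as $\{b,c\},\{d\}$ and only $a_1$ is a \keyally, while in $E_2$ they are grouped as $\{b\},\{c,d\}$ and only $a_2$ is, so \Cref{def:key-ally2} forces $f(E_1)=a_1$ and $f(E_2)=a_2$. Yet $E_2$ is reachable from $E_1$ by splitting and merging \emph{losing} alliances only, so \splitting forces $f(E_1)=f(E_2)$, a contradiction. The leverage your construction lacks is that the probe of \Cref{def:key-ally2} retains the opponents' alliance structure, so regrouping the losing opponents flips which ally survives its probe, while \splitting simultaneously pins the actual winner across the two groupings. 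To salvage your approach you would need a second election related to yours by operations on losing alliances---which is exactly the paper's route.
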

\begin{proof}
    Let $f$ be a basic alliance-aware voting rule that extends Plurality. Consider an election $E$ with five candidates $a_1, a_2, b, c, d$, where only $a_1$ and $a_2$ are allies. The voters' preferences are the following:
        \begin{alignat*}{6}
    10 \text{ votes}  \colon \quad &\fixedwidth{a_1}  &&\succ \fixedwidth{a_2} &&\succ \fixedwidth{b} &&\succ \fixedwidth{c} &&\succ \fixedwidth{d}\\
    35 \text{ votes}  \colon \quad &\fixedwidth{a_1} &&\succ \fixedwidth{b} &&\succ \fixedwidth{c} &&\succ \fixedwidth{d} &&\succ \fixedwidth{a_2}\\
    10 \text{ votes} \colon \quad & \fixedwidth{c} &&\succ \fixedwidth{b} &&\succ \fixedwidth{d} &&\succ \fixedwidth{a_1} &&\succ \fixedwidth{a_2}\\
    10 \text{ votes} \colon \quad &\fixedwidth{c} &&\succ \fixedwidth{d} &&\succ \fixedwidth{b}  &&\succ \fixedwidth{a_1} &&\succ \fixedwidth{a_2} \\
    35 \text{ votes} \colon \quad &\fixedwidth{a_2} &&\succ \fixedwidth{d} &&\succ \fixedwidth{c}  &&\succ \fixedwidth{b} &&\succ \fixedwidth{a_1}
    \end{alignat*}

Here, if $a_1$ and $a_2$ were not allies, $a_1$ would be selected by $f$ (since it extends Plurality). Hence, in $E$ the winner belongs to the alliance $\{a_1, a_2\}$. Now consider a modified election $E_1$ in which $b$ and $c$ are also allies. We will show that then still the winner still belongs to $\{a_1, a_2\}$. Indeed, suppose that it is not the case. Then the winner belongs to $\{b, c\}$ (otherwise, if the winner is $d$, \splitting is violated). Now suppose that $10$ voters from the fourth group improve the position of $b$ in their rankings so that $b$ is between $c$ and $d$. Then $b$ and $c$ are similar and from \clonedallynohelp we obtain that $f$ elects $c$ after removing $b$ from the election. However, then after splitting alliance $\{a_1, a_2\}$ into singletons we obtain a no-ally election in which Plurality elects $a_1$, which is contradictory to \splitting. Hence, in $E_1$ the winner belongs to $\{a_1, a_2\}$.

Now consider an election $E_1'$ obtained from $E_1$ by removing $a_2$. Then, from the analogous reasoning as for $E_1$, we have that $a_1$ is the winner. It means that $a_1$ is a solitary winner in $E_1$. On the other hand, $a_2$ is not a solitary winner in $E_1$, since after removing $a_1$ and $c$, $b$ is the winner (which together with \allynoharm implies that $b$ or $c$ is the winner in the election where we only remove $a_1$). 

Now consider an election $E_2$, obtained from $E$, in which $d$ and $c$ are allies. By the analogous reasoning as for $E_1$, we obtain that the winner belongs to $\{a_1, a_2\}$. Analogously as for $E_1$, we can show that $a_2$ is a solitary winner in $E_2$, while $a_1$ is not.

Hence, if $f$ satisfied \Cref{def:key-ally2}, it would be obliged to elect $a_1$ in $E_1$ and $a_2$ in $E_2$. However, $E_2$ can be obtained from $E_1$ by splitting alliance $\{b, c\}$ into singletons and then merging $A(c)$ and $A(d)$. Since $f$ is \splittingadj, both operations do not change the winner, a contradiction.
\end{proof}

\begin{proposition}
    There exists no basic alliance-aware voting rule that extends Maximin or the Schulze's method, and satisfies \Cref{def:key-ally2}.    
\end{proposition}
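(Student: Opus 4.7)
The plan is to adapt the construction used in the preceding proof for Plurality. I would exhibit a single preference profile over candidates $a_1, a_2, b, c, d$, together with two alliance structures $\alliances_1$ and $\alliances_2$ obtained from the base structure $\{\{a_1,a_2\},\{b\},\{c\},\{d\}\}$ by additionally declaring $\{b,c\}$, respectively $\{c,d\}$, to be an alliance. The goal is to show that in $E_1=(C,V,\alliances_1)$ the unique \keyally in the sense of \Cref{def:key-ally2} is $a_1$, while in $E_2=(C,V,\alliances_2)$ the unique \keyally is $a_2$. Since $E_2$ is reachable from $E_1$ by one split of $\{b,c\}$ into singletons followed by one merge of $\{c\}$ with $\{d\}$, and since neither $\{b,c\}$ nor $\{c,d\}$ wins at any intermediate step, \splitting forces $f(E_1)=f(E_2)$, contradicting the requirement that $f$ return $a_1$ on $E_1$ and $a_2$ on $E_2$.

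First, I would verify that in $E_1$ (and symmetrically in $E_2$) the overall winner lies in $\{a_1,a_2\}$. Following the template of the Plurality proof, this is done by a small similarity perturbation: shift the profile so that $b$ and $c$ become similar in the sense of \clonedallynohelp, apply \clonedallynohelp to delete one of them, apply \splitting to separate $\{a_1,a_2\}$ into singletons, observe that on the resulting no-ally election $f$ must elect a member of $\{a_1,a_2\}$ because $f$ extends Maximin (respectively Schulze), and finally undo the perturbation. The example \Cref{ex:ka-bad-maximin} already exhibits preferences of essentially the right shape for the Maximin computation to come out correctly.

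Next, I would choose head-to-head counts so that after removing $a_2$ from $E_1$ the Maximin (respectively Schulze) winner of the residual election is $a_1$, while after removing $a_1$ it lies in $\{b,c\}$; this pins $a_1$ as the unique \keyally of $E_1$. Symmetrising the profile with respect to a swap of $b$ and $d$, and applying the analogous analysis to $E_2$, pins $a_2$ as the unique \keyally of $E_2$. For Schulze the arithmetic of pairwise minima is replaced with strongest-path strengths, using the fact that merging $\{b,c\}$ or $\{c,d\}$ blocks certain paths through the opponents of $\{a_1,a_2\}$ and opens others, which lets the direction of the cycle on opponents be controlled by the choice of merge.

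The main obstacle, as in the Plurality case, is producing an explicit preference profile that meets all four of the residual-election winner computations simultaneously (two for $E_1$ and two for $E_2$), together with the similarity-perturbation constraints that pin down the winning alliance. The logical skeleton of the reduction is identical to the Plurality proof, so the essential content of the argument lies purely in the concrete numerical construction; once such a profile is found, the contradiction follows immediately by composing the split and the merge and invoking \splitting.
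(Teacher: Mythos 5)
Your high-level skeleton matches the paper's proof exactly: one fixed profile, two alliance structures reachable from one another by a split followed by a merge, different forced \keyallyplural in each, and a contradiction via \splitting. However, the proposal stops short of the actual mathematical content. You explicitly defer ``the concrete numerical construction'' of a profile satisfying all the residual-election computations simultaneously, and for this proposition that construction \emph{is} the proof---everything else is routine once it exists. A plan that says ``choose head-to-head counts so that\dots'' without exhibiting them does not establish that such counts are realizable by an actual preference profile, especially since the constraints interact: the same profile must (i) make $a_1$ or $a_2$ the Maximin/Schulze winner of the no-ally election, (ii) make the appropriate opponent similar-after-perturbation so that the winning alliance can be pinned to $\{a_1,a_2\}$ in both $E_1$ and $E_2$, and (iii) produce opposite solitary-winner verdicts under the two merges. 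This is a genuine gap, not a presentational one.

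Two further points of caution. First, the paper's construction for Maximin/Schulze uses \emph{six} candidates with two disjoint merged pairs ($\{b,c\}$ and $\{d,e\}$), not the five-candidate, overlapping-merge pattern ($\{b,c\}$ and $\{c,d\}$) that the paper uses for Plurality and that you propose to reuse; it is not clear that the five-candidate pattern can be made to work for Maximin, and you give no evidence that it can. Second, the paper does not use your ``similarity perturbation'' (improving a candidate's position) in the Maximin/Schulze case: instead, $b$ and $c$ are arranged to become similar precisely after $a_2$ is removed, and the contradiction is derived from \allynoharm together with \clonedallynohelp on that residual election. Your perturbation route is borrowed from the Plurality argument and would need to be checked separately against Maximin and strongest-path arithmetic. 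Neither issue is necessarily fatal, but both mean the proposal cannot be accepted as a proof until an explicit profile is produced and all four residual winners are verified.
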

\begin{proof}
    Let $f$ be a basic alliance-aware voting rule that extends Maximin or Schulze. Consider an election $E$ with six candidates $a_1, a_2, b, c, d, e$, where only $a_1$ and $a_2$ are allies. The voters' preferences are the following:
\begin{center}
\begin{minipage}{0.45\linewidth}
    \begin{alignat*}{6}
    1 \text{ vote}  \colon \quad &\fixedwidth{b}  &&\succ \fixedwidth{a_2} &&\succ \fixedwidth{c} &&\succ \fixedwidth{e} &&\succ \fixedwidth{d} &&\succ \fixedwidth{a_1}\\
    1 \text{ vote}  \colon \quad &\fixedwidth{a_1} &&\succ \fixedwidth{e} &&\succ \fixedwidth{d} &&\succ \fixedwidth{c} &&\succ \fixedwidth{b} &&\succ \fixedwidth{a_2}\\
    1 \text{ vote}  \colon \quad &\fixedwidth{d}  &&\succ \fixedwidth{a_1} &&\succ \fixedwidth{e} &&\succ \fixedwidth{c} &&\succ \fixedwidth{b} &&\succ \fixedwidth{a_2}\\
    1 \text{ vote}  \colon \quad &\fixedwidth{a_2} &&\succ \fixedwidth{c} &&\succ \fixedwidth{b} &&\succ \fixedwidth{e} &&\succ \fixedwidth{d} &&\succ \fixedwidth{a_1}\\
    2 \text{ votes}  \colon \quad &\fixedwidth{a_2}  &&\succ \fixedwidth{e} &&\succ \fixedwidth{d} &&\succ \fixedwidth{a_1} &&\succ \fixedwidth{c} &&\succ \fixedwidth{b}\\
    2 \text{ votes}  \colon \quad &\fixedwidth{a_1} &&\succ \fixedwidth{c} &&\succ \fixedwidth{b} &&\succ \fixedwidth{a_2} &&\succ \fixedwidth{e} &&\succ \fixedwidth{d}\\
    2 \text{ votes}  \colon \quad &\fixedwidth{a_2}  &&\succ \fixedwidth{c} &&\succ \fixedwidth{b} &&\succ \fixedwidth{a_1} &&\succ \fixedwidth{e} &&\succ \fixedwidth{d}\\
    2 \text{ votes}  \colon \quad &\fixedwidth{d} &&\succ \fixedwidth{a_1} &&\succ \fixedwidth{e} &&\succ \fixedwidth{b} &&\succ \fixedwidth{a_2} &&\succ \fixedwidth{c}    
    \end{alignat*}
\end{minipage}
\begin{minipage}{0.45\linewidth}
\begin{center}
\begin{tikzpicture}
    \node (a1) at (3, 2) {$a_1$};
    \node (a2) at (3, -2) {$a_2$};
    \node (b) at (6, 0) {$b$};
    \node (c) at (4, 0) {$c$};
    \node (e) at (2, 0) {$e$};
    \node (d) at (0, 0) {$d$};

    \draw [-{Stealth[scale=1]}] (a1) -- (b) node [pos=.5, above, sloped] {8:4};
    \draw [-{Stealth[scale=1]}] (a1) -- (c) node [pos=.5, above, sloped] {8:4};
     \draw [-{Stealth[scale=1]}] (a1) -- (e) node [pos=.5, above, sloped] {8:4};
    \draw [-{Stealth[scale=1]}] (d) -- (a1) node [pos=.5, above, sloped] {7:5};

    \draw [-{Stealth[scale=1]}] (b) -- (a2) node [pos=.5, above, sloped] {7:5};
    \draw [-{Stealth[scale=1]}] (a2) -- (c) node [pos=.5, above, sloped] {8:4};
     \draw [-{Stealth[scale=1]}] (a2) -- (e) node [pos=.5, above, sloped] {8:4};
    \draw [-{Stealth[scale=1]}] (a2) -- (d) node [pos=.5, above, sloped] {8:4};

    \draw [-{Stealth[scale=1]}] (e) -- (d) node [pos=.5, above, sloped] {9:3};
    \draw [-{Stealth[scale=1]}] (c) -- (b) node [pos=.5, above, sloped] {9:3};
\end{tikzpicture}
\end{center}
\end{minipage}
\end{center}

Here, if $E$ was a no-ally election, both Maximin and Schulze would elect either $a_1$ or $a_2$. For the original $E$ it is therefore also the case. 

Consider now election $E_1$ obtained from $E$ by merging $A(c)$ and $A(b)$. In this election, the winner also belongs to $\{a_1, a_2\}$. Indeed, otherwise from \splitting we obtain that the winner belongs to $\{b, c\}$. However, then after removing $a_2$ from the election, candidates $b$ and $c$ are similar. Further, after removing $b$ we obtain a no-ally election in which both Maximin and Schulze elect $a_1$, a contradition with \allynoharm and \clonedallynohelp. Hence, in election $E_1$ the winner belongs to $\{a_1, a_2\}$. Repeating the above reasoning, we obtain that $a_1$ is a solitary winner in $E_1$, according to \Cref{def:key-ally2}. For $a_2$ it is not the case---after removing $a_1$ from the election, a candidate from $\{b, c\}$ is elected, since after further removal of $c$, $b$ becomes the Condorcet winner. Hence, \Cref{def:key-ally2} requires that $f$ should elect $a_1$ in $E_1$.

Consider now election $E_2$ obtained from $E$ by merging $A(d)$ and $A(e)$. By an analogous reasoning, we obtain that $a_2$ is a \keyally in $E_2$ according to \Cref{def:key-ally2}, while $a_1$ is not. Hence, \Cref{def:key-ally2} requires to elect $a_2$ in $E_2$. However, $E_2$ can be obtained from $E_1$ by splitting alliance $\{b, c\}$ and merging $A(d)$ and $A(e)$.  Since $f$ is \splittingadj, both operations do not change the winner, a contradiction.
\end{proof}

\section{The Model with Nested Alliances: Details}\label{app:laminar}

In this section we assume that set $\alliances$ forms a laminar family, i.e., for each $A_1, A_2\in \alliances$ ($|A_1| \leq |A_2|$) we have that either $A_1\subseteq A_2$ or $A_1\cap A_2 = \emptyset$. Let us assume that $C\notin \alliances$. 
Now we slightly change our notation: given a candidate $c\in C$, we denote by $\alliances(c)\subseteq \alliances$ the set of all the alliances such that for each $A\in \alliances(c)$ we have that $c\in A$. 

Note that now if for two candidates $a, b$ there exists an alliance $A$ such that $a, b\in A$ and there exists an alliance $A'$ such that $a\in A', b\notin A'$ then it is more difficult to say whether $a$ and $b$ are ``allies'' or ``opponents''. Our new notions are the following: we say that $a$ and $b$ are \emph{allies with respect to alliance $A$} and \emph{opponents with respect to alliance $A'$}. Intuitively, $a$ and $b$ do not want to harm each other with the confrontation against candidates outside of $A$ and at the same time, $a$ does not want to harm other members of $A'$ with the confrontation against $b$.

Now let us present the adapted definitions of our basic axioms introduced in \Cref{sec:basic-axioms} for the setting with nested alliances.

\begin{definition}[\Allynoharm]
We say that a voting rule~${f}$ satisfies \emph{\allynoharm} criterion if for every elections~${E, E'}$ such that~${E'}$ is obtained from~${E}$ by removing a candidate~${c}$ and for each $A\in \alliances(c)$, we have that
\begin{align*}
    f(E) \notin A \implies f(E') \notin A.
\end{align*}
\end{definition}

\begin{definition}[\Splitting]
    We say that a voting rule~${f}$ is \emph{\splittingadj} if for every elections~${E, E'}$ such that~${E'}$ is obtained from~${E}$ by splitting an alliance~${A}$ into two alliances~${A_1, A_2}$ (so that~${A_1\cup A_2 = A}$,~${A_1\cap A_2 = \emptyset}$ and $\alliances$ is still a laminar family) we have that
    \begin{equation*}
        f(E) \notin A \implies f(E) = f(E').
    \end{equation*} 
\end{definition}

\begin{definition}[\Clonedallynohelp]
    We say that two candidates $a, b$ are \emph{similar with respect to alliance $A$}, if $a, b\in A$ and there is no voter who ranks a candidate $c\notin A$ between $a$ and $b$.

    A voting rule~${f}$ is \emph{\clonedallynohelpadj} if for every election~$E=(C, V, \alliances)$, every alliance $A\in \alliances$, every pair of candidates~$c, c'$ similar with respect to~$A$, and an election $E'$ obtained from $E$ by removing $c$, we have that:
    \begin{align*}
        \text{(1)} \quad & {f(E) \in A \implies f(E') \in A},\\
        \text{(2)} \quad & {f(E) \notin A \implies f(E)=f(E')}.
    \end{align*}
\end{definition}

\begin{definition}[Alliance monotonicity]
     We say that a voting rule~${f}$ is \emph{alliance monotone} if for every elections~${E, E'}$ such that~${E'}$ is obtained from~${E}$ by improving the position of some candidate $c$ in a voter's ranking (without changing the relative order of other candidates) and for each alliance $A\in \alliances(c)$, it holds that
     \begin{equation*}
         f(E)\in A \implies f(E')\in A.
     \end{equation*}
\end{definition}

Intuitively, the guarantees provided by \allynoharm, \clonedallynohelp and alliance monotonicity for a candidate $c$, now hold for every alliance containing $c$, not just one.

The notion of \independentwinner consistency remains the same as in the setting with disjoint alliances: The \independentwinner is the candidate who would have won the election if he or she have run as an individual candidate.

\begin{definition}[\Independentwinner consistency]
Given election~${E}$ and voting rule~${f}$, we say that candidate~${c}$ is an \emph{\independentwinner} under~${f}$, if~${c=f(E')}$, where~${E'}$ is obtained from~${E}$ by removing $c$ from every alliance $A$ such that $c\in A$ and adding a singleton alliance $\{c\}$.

We say that a voting rule~${f}$ is \emph{\independentxwinner-consistent} (or, \emph{\independentwinnershort-consistent}, for short) if it elects the \independentwinner whenever one exist.
\end{definition}

The extension of the definition of \keyally consistency is a bit more involved. Note that in the model with nested alliances, elections with merely two alliances would be significantly more specific than in the model with disjoint ones. However, note that for rules satisfying \splitting, the notion of \keyally in \Cref{def:key-ally} can be equivalently rephrased as follows: Given an election~${E}$ (possibly not two-alliance) and a voting rule~${f}$, we say that a candidate~${c}$ is a \emph{\keyally} under~${f}$, if~${c=f(E')}$, where~${E'}$ is obtained from~${E}$ by removing from the election all the candidates from~${A}$, except for~${c}$, and merging all the alliances of the opponents into one alliance $C\setminus A$. Such a definition would be equivalent to \Cref{def:key-ally} for all the basic alliance-aware rules. We define \keyally consistency for the setting with nested alliances in the similar spirit: 

\begin{definition}[\Keyally with respect to alliance $A$]
Given an election~${E}$ and a voting rule~${f}$, we say that a candidate~${c}$ is a \emph{\keyally} under~${f}$ with respect to alliance $A\in \alliances(c)$, if~${c=f(E')}$, where~${E'}$ is obtained from~${E}$ by removing from the election all the candidates from~${A}$, except for~${c}$, and adding an alliance $C\setminus A$. 
\end{definition}

Note that, if for an election $E$, there are two candidates $a, b$ such that $a$ is a \keyally with respect to alliance $A_1$ and $b$ is a \keyally with respect to alliance $A_2$, then either $A_1\subseteq A_2$ or vice versa. Now \keyally consistency requires that among the existing \keyallyplural, the one who is a \keyally with respect to the smallest alliance $A_{\mathrm{min}}$ is elected.

Note that if we assume that for each election $E=(C, V, \alliances)$ and candidate $c\in C$ it holds that $\{c\}\in \alliances$, our axioms become stronger than their disjoint-alliance counterparts. Specifically, alliance monotonicity implies montonicity and \keyally consistency implies majority consistency (Condorcet consistency) for each basic alliance-aware rule extending a majority-consistent (Condorcet-consistent) rule.

From now, let us then keep the aforementioned assumption (if this is not the case, modify the considered election $E$ accordingly) to simplify the description of our adapted alliance-aware voting rules. 

We start with presenting the extension of the idea of alliance-aware~$f$-score:

\begin{definition}[Alliance-aware~${f}$-score]
    Let~${f\in \{\text{Plurality}, \text{Maximin}\}}$. Fix an election~${E}$ and a candidate~${c}$. Let~${E'}$ be the election obtained from~${E}$ by removing all the candidates from~${A\setminus\{c\}}$ where $A$ is the greatest alliance containing $c$. The \emph{alliance-aware~${f}$-score} of~${c}$ in~${E}$ is defined as his or her standard~${f}$-score in~${E'}$. 
\end{definition}

Now we are ready to present the definitions of the \independentwinnershort- and \keyallyshort- rules:

\begin{definition}[\independentwinnershort-$f$]
    Let~${f\in \{\text{Plurality}, \text{Maximin}\}}$. Initially, the set of potential winner $W$ equals $C$. The \independentwinnershort-$f$ rule proceeds in rounds: After each round, the size of $W$ decreases. In each round, we compute alliance-aware~${f}$-scores for each candidate and we choose the greatest alliance~${A}$ of a candidate from $W$ who has the highest score. Then we set~$W=A$, delete alliance $A$ (without removing any candidates), and repeat until set $W$ becomes a singleton. 
\end{definition}

\begin{definition}[\keyallyshort-$f$]
    Let~${f\in \{\text{Plurality}, \text{Maximin}\}}$. The \keyallyshort-$f$ rule proceeds in rounds: After each round, the size of $W$ decreases. In each round, we compute alliance-aware~${f}$-scores for each candidate, and select the set of candidates $T$ who scored more than $\nicefrac{n}{2}$ points. If there is no such candidate, the rule terminates and returns the candidate with the highest alliance-aware~${f}$-score. Otherwise, all the candidates from~$C\setminus T$ are removed, together with all the alliances $A\in \alliances$ such that $T \subseteq A$, and the procedure is repeated.
\end{definition}

It is straightforward to check that our rules have the same axiomatic properties as their disjoint-alliance counterparts (the argumentation is analogous as in the proofs of \Cref{thm:iw-f} and \Cref{thm:sw-f}, yet for higher number of rounds; note that a pair of candidates $a, b$ who are allies with respect to alliance $A$ may start to ``harm'' each other only when no candidate outside of $A$ has a chance of winning).


\section{Experiments: Full Results}\label{app:experiments}

In~\Cref{tab:spoilerity:primaries_all} we present additional results for probability of non-optimal primary winner occurrence. We focus on Plurality, Maximin, and Schulze (see the definition in \Cref{app:schulze}). Each entry is an average over 1000 elections with 8|10|12 candidates, 101 voters, and 2|3 alliances.

In~\Cref{tab:social_welfare_all_1} and~\Cref{tab:social_welfare_all_2} we present additional analysis of social welfare of Plurality, Maximin, and Schulze. Each entry is an average over 1000 elections with 8|10|12 candidates, 101 voters, and 2 and 3 alliances, respectively.

\begin{table}[t]
\small
\centering
\begin{tabular}{c | cc | cccc}
  \toprule
   & |A| & |C| & IC & Euc 3D & Euc 2D & Euc 1D\\
     \midrule 
Plurality (Joint Primaries) & 2 & 8 & 33.9\% & 7.2\% & 9.0\% & 0.6\% \\ 
Plurality (Disjoint Primaries) & 2 & 8 & 37.9\% & 16.1\% & 28.3\% & 31.4\% \\ 
Maximin (Joint Primaries) & 2 & 8 & 28.2\% & 1.4\% & 0.8\% & 0.0\% \\ 
Maximin (Disjoint Primaries) & 2 & 8 & 30.9\% & 10.0\% & 14.7\% & 31.7\% \\ 
Schulze (Joint Primaries) & 2 & 8 & 28.4\% & 1.4\% & 0.8\% & 0.0\% \\ 
Schulze (Disjoint Primaries) & 2 & 8 & 30.2\% & 9.8\% & 14.6\% & 31.7\% \\ 
\midrule
Plurality (Joint Primaries) & 2 & 10 & 45.6\% & 10.9\% & 14.5\% & 0.0\% \\ 
Plurality (Disjoint Primaries) & 2 & 10 & 49.3\% & 24.2\% & 37.5\% & 55.6\% \\ 
Maximin (Joint Primaries) & 2 & 10 & 35.3\% & 1.7\% & 1.6\% & 0.0\% \\ 
Maximin (Disjoint Primaries) & 2 & 10 & 40.2\% & 11.3\% & 17.7\% & 67.1\% \\ 
Schulze (Joint Primaries) & 2 & 10 & 34.8\% & 1.6\% & 1.6\% & 0.0\% \\ 
Schulze (Disjoint Primaries) & 2 & 10 & 41.0\% & 11.2\% & 17.7\% & 66.9\% \\ 
\midrule
Plurality (Joint Primaries) & 2 & 12 & 52.3\% & 15.0\% & 17.0\% & 0.2\% \\ 
Plurality (Disjoint Primaries) & 2 & 12 & 55.9\% & 31.3\% & 44.6\% & 35.4\% \\ 
Maximin (Joint Primaries) & 2 & 12 & 39.6\% & 1.8\% & 1.9\% & 0.0\% \\ 
Maximin (Disjoint Primaries) & 2 & 12 & 46.2\% & 13.6\% & 24.6\% & 36.6\% \\ 
Schulze (Joint Primaries) & 2 & 12 & 38.8\% & 1.8\% & 1.9\% & 0.0\% \\ 
Schulze (Disjoint Primaries) & 2 & 12 & 46.1\% & 13.9\% & 24.7\% & 36.5\% \\ 
\midrule
Plurality (Joint Primaries) & 3 & 8 & 31.9\% & 9.0\% & 9.6\% & 1.8\% \\ 
Plurality (Disjoint Primaries) & 3 & 8 & 26.3\% & 18.2\% & 25.6\% & 21.5\% \\ 
Maximin (Joint Primaries) & 3 & 8 & 23.0\% & 1.7\% & 0.9\% & 0.0\% \\ 
Maximin (Disjoint Primaries) & 3 & 8 & 28.1\% & 11.7\% & 23.8\% & 23.4\% \\ 
Schulze (Joint Primaries) & 3 & 8 & 22.0\% & 1.7\% & 0.9\% & 0.0\% \\ 
Schulze (Disjoint Primaries) & 3 & 8 & 27.4\% & 11.7\% & 23.8\% & 23.4\% \\ 
\midrule
Plurality (Joint Primaries) & 3 & 10 & 39.8\% & 14.0\% & 13.4\% & 1.7\% \\ 
Plurality (Disjoint Primaries) & 3 & 10 & 38.8\% & 27.0\% & 33.9\% & 26.4\% \\ 
Maximin (Joint Primaries) & 3 & 10 & 31.3\% & 1.7\% & 0.8\% & 0.0\% \\ 
Maximin (Disjoint Primaries) & 3 & 10 & 36.8\% & 17.8\% & 29.4\% & 30.2\% \\ 
Schulze (Joint Primaries) & 3 & 10 & 30.0\% & 1.8\% & 0.8\% & 0.0\% \\ 
Schulze (Disjoint Primaries) & 3 & 10 & 36.1\% & 17.4\% & 29.4\% & 30.1\% \\ 
\midrule
Plurality (Joint Primaries) & 3 & 12 & 40.5\% & 15.0\% & 21.0\% & 0.5\% \\ 
Plurality (Disjoint Primaries) & 3 & 12 & 43.5\% & 33.3\% & 42.4\% & 30.1\% \\ 
Maximin (Joint Primaries) & 3 & 12 & 32.8\% & 1.8\% & 2.2\% & 0.0\% \\ 
Maximin (Disjoint Primaries) & 3 & 12 & 42.3\% & 20.2\% & 32.2\% & 35.2\% \\ 
Schulze (Joint Primaries) & 3 & 12 & 32.0\% & 1.8\% & 2.2\% & 0.0\% \\ 
Schulze (Disjoint Primaries) & 3 & 12 & 41.6\% & 19.9\% & 31.9\% & 35.0\% \\ 

  \bottomrule
\end{tabular}
 \caption{\label{tab:spoilerity:primaries_all} Probability of non-optimal primary winner occurrence. Setup: 1000 elections (per culture)  with  8|10|12 candidates, 101 voters, and 2|3 alliances.}
\end{table}


\begin{table}[t]
\small
\centering
\begin{tabular}{c | cc | cccc}
  \toprule
   & |A| & |C| & IC & Euc 3D & Euc 2D & Euc 1D\\
   \midrule
Plurality & 2 & 8 & 375.4 & 445.1 & 413.6 & 378.5 \\ 
Plurality (Joint Primaries) & 2 & 8 & 381.7 & 477.0 & 465.7 & 438.9 \\ 
Plurality (Disjoint Primaries) & 2 & 8 & 379.1 & 461.8 & 437.4 & 408.1 \\ 
IW-Plurality & 2 & 8 & 372.7 & 448.0 & 420.2 & 393.0 \\ 
SW-Plurality & 2 & 8 & 376.4 & 479.8 & 471.7 & 441.1 \\ 
Maximin & 2 & 8 & 386.2 & 494.3 & 487.3 & 456.0 \\ 
Maximin (Joint Primaries) & 2 & 8 & 384.7 & 494.1 & 487.2 & 456.0 \\ 
Maximin (Disjoint Primaries) & 2 & 8 & 382.2 & 484.3 & 469.4 & 419.2 \\ 
IW-Maximin & 2 & 8 & 385.9 & 494.3 & 487.3 & 456.0 \\ 
SW-Maximin & 2 & 8 & 385.1 & 494.1 & 487.3 & 456.0 \\ 
Schulze & 2 & 8 & 386.5 & 494.4 & 487.3 & 456.0 \\ 
Schulze (Joint Primaries) & 2 & 8 & 384.8 & 494.1 & 487.2 & 456.0 \\ 
Schulze (Disjoint Primaries) & 2 & 8 & 382.5 & 484.3 & 469.5 & 419.2 \\ 
    \midrule 
Plurality & 2 & 10 & 480.8 & 564.0 & 523.8 & 474.1 \\ 
Plurality (Joint Primaries) & 2 & 10 & 490.3 & 616.6 & 595.6 & 580.5 \\ 
Plurality (Disjoint Primaries) & 2 & 10 & 486.5 & 588.9 & 551.6 & 525.8 \\ 
IW-Plurality & 2 & 10 & 478.9 & 568.1 & 532.1 & 493.6 \\ 
SW-Plurality & 2 & 10 & 483.9 & 626.1 & 607.9 & 580.8 \\ 
Maximin & 2 & 10 & 498.5 & 643.8 & 631.1 & 581.6 \\ 
Maximin (Joint Primaries) & 2 & 10 & 497.0 & 643.8 & 631.1 & 581.6 \\ 
Maximin (Disjoint Primaries) & 2 & 10 & 492.0 & 628.7 & 605.0 & 515.0 \\ 
IW-Maximin & 2 & 10 & 497.9 & 643.9 & 631.0 & 581.6 \\ 
SW-Maximin & 2 & 10 & 496.4 & 643.8 & 631.1 & 581.6 \\ 
Schulze & 2 & 10 & 498.8 & 643.9 & 631.1 & 581.6 \\ 
Schulze (Joint Primaries) & 2 & 10 & 497.1 & 643.9 & 631.1 & 581.6 \\ 
Schulze (Disjoint Primaries) & 2 & 10 & 491.9 & 628.9 & 605.0 & 514.9 \\ 
    \midrule 
Plurality & 2 & 12 & 586.2 & 680.6 & 629.5 & 585.2 \\ 
Plurality (Joint Primaries) & 2 & 12 & 600.5 & 751.2 & 730.4 & 674.7 \\ 
Plurality (Disjoint Primaries) & 2 & 12 & 595.2 & 713.7 & 668.3 & 633.6 \\ 
IW-Plurality & 2 & 12 & 584.2 & 688.7 & 642.1 & 606.4 \\ 
SW-Plurality & 2 & 12 & 592.1 & 765.4 & 742.4 & 675.6 \\ 
Maximin & 2 & 12 & 610.2 & 789.7 & 773.4 & 711.5 \\ 
Maximin (Joint Primaries) & 2 & 12 & 609.1 & 789.4 & 773.2 & 711.5 \\ 
Maximin (Disjoint Primaries) & 2 & 12 & 602.4 & 768.7 & 733.7 & 655.1 \\ 
IW-Maximin & 2 & 12 & 609.7 & 789.6 & 773.4 & 711.5 \\ 
SW-Maximin & 2 & 12 & 608.5 & 789.5 & 773.2 & 711.5 \\ 
Schulze & 2 & 12 & 611.0 & 789.7 & 773.5 & 711.5 \\ 
Schulze (Joint Primaries) & 2 & 12 & 609.2 & 789.3 & 773.3 & 711.5 \\ 
Schulze (Disjoint Primaries) & 2 & 12 & 602.6 & 768.9 & 733.8 & 654.9 \\
    
  \bottomrule
\end{tabular}
 \caption{\label{tab:social_welfare_all_1} Average social welfare. Setup: 1000 elections (per culture)  with  8|10|12 candidates, 101 voters, and 2 alliances.}
\end{table}

\begin{table}[t]
\small
\centering
\begin{tabular}{c | cc | cccc}
  \toprule
   & |A| & |C| & IC & Euc 3D & Euc 2D & Euc 1D\\
   \midrule
Plurality & 3 & 8 & 374.8 & 443.4 & 410.7 & 376.2 \\ 
Plurality (Joint Primaries) & 3 & 8 & 381.8 & 476.7 & 463.0 & 429.5 \\ 
Plurality (Disjoint Primaries) & 3 & 8 & 379.6 & 462.6 & 435.1 & 398.1 \\ 
IW-Plurality & 3 & 8 & 373.4 & 448.5 & 420.0 & 378.6 \\ 
SW-Plurality & 3 & 8 & 376.0 & 474.7 & 463.5 & 433.5 \\ 
Maximin & 3 & 8 & 386.5 & 492.9 & 486.0 & 455.1 \\ 
Maximin (Joint Primaries) & 3 & 8 & 385.7 & 492.9 & 486.0 & 455.1 \\ 
Maximin (Disjoint Primaries) & 3 & 8 & 382.7 & 480.7 & 460.2 & 440.8 \\ 
IW-Maximin & 3 & 8 & 386.3 & 492.9 & 486.1 & 455.1 \\ 
SW-Maximin & 3 & 8 & 385.4 & 492.9 & 486.0 & 455.1 \\ 
Schulze & 3 & 8 & 386.9 & 492.9 & 486.1 & 455.1 \\ 
Schulze (Joint Primaries) & 3 & 8 & 385.9 & 492.9 & 486.0 & 455.1 \\ 
Schulze (Disjoint Primaries) & 3 & 8 & 383.0 & 480.9 & 460.1 & 440.9 \\ 
\midrule
Plurality & 3 & 10 & 479.9 & 562.6 & 527.0 & 480.3 \\ 
Plurality (Joint Primaries) & 3 & 10 & 490.3 & 619.5 & 598.9 & 552.6 \\ 
Plurality (Disjoint Primaries) & 3 & 10 & 486.4 & 592.8 & 560.6 & 515.0 \\ 
IW-Plurality & 3 & 10 & 478.0 & 567.5 & 535.4 & 482.0 \\ 
SW-Plurality & 3 & 10 & 481.3 & 611.9 & 596.2 & 558.2 \\ 
Maximin & 3 & 10 & 497.1 & 642.7 & 631.1 & 583.5 \\ 
Maximin (Joint Primaries) & 3 & 10 & 495.7 & 642.6 & 631.0 & 583.5 \\ 
Maximin (Disjoint Primaries) & 3 & 10 & 491.9 & 621.4 & 591.4 & 562.6 \\ 
IW-Maximin & 3 & 10 & 496.8 & 642.8 & 631.2 & 583.5 \\ 
SW-Maximin & 3 & 10 & 495.7 & 642.9 & 631.1 & 583.5 \\ 
Schulze & 3 & 10 & 497.6 & 643.0 & 631.2 & 583.5 \\ 
Schulze (Joint Primaries) & 3 & 10 & 496.0 & 642.6 & 631.0 & 583.5 \\ 
Schulze (Disjoint Primaries) & 3 & 10 & 492.2 & 622.2 & 591.3 & 562.7 \\ 
\midrule
Plurality & 3 & 12 & 585.7 & 677.0 & 630.6 & 571.5 \\ 
Plurality (Joint Primaries) & 3 & 12 & 601.6 & 760.2 & 728.6 & 675.7 \\ 
Plurality (Disjoint Primaries) & 3 & 12 & 595.1 & 716.9 & 671.1 & 621.3 \\ 
IW-Plurality & 3 & 12 & 584.9 & 685.3 & 639.4 & 570.0 \\ 
SW-Plurality & 3 & 12 & 590.1 & 748.6 & 730.9 & 680.4 \\ 
Maximin & 3 & 12 & 610.2 & 789.6 & 772.6 & 708.7 \\ 
Maximin (Joint Primaries) & 3 & 12 & 608.8 & 789.1 & 772.6 & 708.7 \\ 
Maximin (Disjoint Primaries) & 3 & 12 & 602.3 & 762.1 & 722.9 & 680.8 \\ 
IW-Maximin & 3 & 12 & 610.1 & 789.4 & 772.5 & 708.7 \\ 
SW-Maximin & 3 & 12 & 608.7 & 789.1 & 772.5 & 708.7 \\ 
Schulze & 3 & 12 & 611.0 & 789.4 & 772.6 & 708.7 \\ 
Schulze (Joint Primaries) & 3 & 12 & 609.0 & 789.1 & 772.6 & 708.7 \\ 
Schulze (Disjoint Primaries) & 3 & 12 & 602.7 & 762.2 & 723.2 & 680.6 \\ 

  \bottomrule
\end{tabular}
 \caption{\label{tab:social_welfare_all_2} Average social welfare. Setup: 1000 elections (per culture) with  8|10|12 candidates, 101 voters, and 3 alliances.}
\end{table}

\end{document}